\def\wrtext#1{\relax\ifmmode{\leavevmode\hbox{#1}}\else{#1}\fi}
\def\abs#1{\left|#1\right|}
\def\begeq{\begin{equation}}
\def\endeq{\end{equation}}
\def\part#1{\frac{\partial}{\partial #1}}
\newcommand{\real}{\mbox{\bf R}}
\newcommand{\comp}{\mbox{\bf C}}
\newcommand{\nat}{\mbox{\bf N}}
\renewcommand{\Re}{\mbox{\rm Re\,}}
\renewcommand{\exp}{\mbox{\rm exp\,}}
\def\bdy{boundary}
\def\e{equation}
\def\fu{function}
\def\neigh{neighborhood}
\def\Re{{\rm Re\,}}
\newtheorem{dref}{Definition}[section] 
\newtheorem{theo}[dref]{Theorem} \newtheorem{prop}[dref]{Proposition}
\newtheorem{remark}[dref]{Remark} 
\newtheorem{cor}[dref]{Corollary}
\newenvironment{proof}{\par\noindent{{\bf Proof.}}}{\hfill$\Box$
\medskip}
\newcommand{\ekv}[2]{\begin{equation}\label{#1}#2\end{equation}}
\def\mmm{{\cal M}}
\def\R{\mathbb R}
\def\D{\partial}
\def\abs#1{\left\vert#1\right\vert}
\def\set#1{\left\{#1\right\}}
\def\sep#1{\left(#1\right)}
\def\Re{{\mathrm Re\,}}
\newcommand{\preuve}[1][\!\!]{\noindent{\it Proof #1. \ \ }}
\renewcommand{\R}{\mbox{\bf R}}
\author{ Fr\'ed\'eric H\'erau\footnote{{This work was supported by the French Agence Nationale de la Recherche, NOSEVOL project,  ANR 2011 BS01019 01}}
\\\small Laboratoire de Math\'ematiques Jean Leray\\\small Universit\'e de Nantes\\\small 2, rue de la Houssini\`ere \\\small BP 92208, 44322 Nantes Cedex 3,
France\\\small and FRE 3111 CNRS\\\small herau@univ-reims.fr \and Michael Hitrik \\\small Department of Mathematics \\\small University of California
  \\\small Los Angeles \\\small CA 90095-1555, USA\\\small hitrik@math.ucla.edu \and
  Johannes Sj\"ostrand$^\dagger$
  \\\small IMB, Universit\'e de Bourgogne\\
  \small 9, Av. A. Savary, BP 47870\\
  \small FR-21078 Dijon C\'edex\\
  \small  and UMR 5584, CNRS\\
  \small johannes.sjostrand@u-bourgogne.fr} \date{} \title{Tunnel effect and
  symmetries for Kramers Fokker-Planck type operators\\
}
\title{Supersymmetric structures for second order differential operators \footnote{In
    memory of Vladimir Buslaev}}
\begin{document}

\maketitle

\vspace*{1cm}
\noindent
{\bf Abstract}: Necessary and sufficient conditions are obtained for a real semiclassical partial differential operator of order two to
possess a supersymmetric structure. For the operator coming from a chain of oscillators, coupled to two heat baths, we show the non-existence of a smooth
supersymmetric structure, for a suitable interaction potential, provided that the temperatures of the baths are different.

%

\tableofcontents
\section{Introduction and statement of results}
\setcounter{equation}{0}
In a large number of problems coming from statistical or quantum mechanics, involving real partial differential operators of order two with the spectrum contained
in the right half-plane, one is often interested in the splitting between the two smallest real parts of the eigenvalues, at least when it is possible to show
that the first one is simple and isolated. This kind of issue may concern the Schr\"odinger operator or the Witten Laplacian in quantum mechanics, and the
Kramers-Fokker-Planck operator, or more generally, some models of chains of oscillators coupled to heat baths, in statistical mechanics.

\medskip
\noindent
In the semiclassical context, the tools available for studying the splitting are very powerful, and allow for a detailed analysis for models where a natural small
parameter is in play, e.g. the Planck constant or the low temperature parameter. In some cases the splitting may be exponentially small with respect to
the parameter, and a so-called tunneling effect may appear, related to a very low rate of convergence for the associated evolution problem. It is often related
to some degenerate geometry in the model, such as the presence of multiple confining wells.

\medskip
\noindent
In fact, for some of the preceding semiclassical differential operators, a particularly convenient structure is sometimes available,
which simplifies considerably the analysis of the splitting and of the tunneling effect. Indeed, when the operators have a Hodge Laplacian type structure of the
form $P = d^*d$, where $d$ is a (perhaps modified)  de Rham operator, the eigenspaces have some natural stability properties with respect to the operator $d$, and
the study of the eigenvalues can be reduced to the study of the singular values of $d$ on finite dimensional subspaces. This type of operators is sometimes called
\it supersymmetric \rm (see in particular \cite{Wit82} and the full definition given below), and this method was successfully employed for determining the splitting
corresponding to some models mentioned above.

\medskip
\noindent
The natural question raised in this article is to determine under which conditions a given (not necessarily selfadjoint) operator is of supersymmetric type.
The answer given here implies that for some models describing systems out of equilibrium, there is no such property, and we give a concrete example
in the last section of the paper. The significance of this kind of result is that the study of the splitting for these models may be far more complicated,
in particular in the non-selfadjoint case, and to the best of our knowledge, no such analysis has been performed in the non-selfadjoint case so far.

\bigskip
\noindent
By way of introduction and also, for motivation, let us mention now some examples of supersymmetric differential operators and related results about the eigenvalue
splitting. The first example is the Witten-Hodge Laplacian,
\begin{equation} \label{witten}
W = -h^2 \Delta + (\D_x V(x))^2 - h \Delta V(x),
\end{equation}
on ${\real}^n_x$. Here $V\in C^{\infty}(\real^n;\real)$ is a Morse function with $\partial^{\alpha} V \in L^{\infty}(\real^n)$ for $\abs{\alpha}\geq 2$, and such
that $| \D_x V(x) | \geq 1/C$ when $|x| \geq C>0$. The second example is the Kramers-Fokker-Planck operator given by
\begin{equation}
\label{KFP}
K = -h^2 \Delta_y + y^2 - hn + y\cdot h\D_x - \D_x V(x) \cdot h \D_y,
\end{equation}
on ${\real}^{2n}_{x,y}$, where the potential $V$ has the same properties as above. This is a kinetic (non-selfadjoint) model for an oscillator
coupled to a heat bath.

\medskip
\noindent
For these two models, the splitting can be evaluated very precisely in the semiclassical limit $h\rightarrow 0$. Considering $W$ and $K$ as unbounded
operators on $L^2$, it was shown in \cite{HeSj85}, \cite{HKN08} for the Witten Laplacian, and in \cite{HHS08a}, \cite{HHS08b} for the Kramers-Fokker-Planck
operator, that, under suitable assumptions, it is exponentially small. We notice here that in both cases, the operators have non-negative symmetric parts and
have $\mu_1= 0$ as a simple eigenvalue, under the additional assumption that $V(x)\rightarrow \infty$ as $x\rightarrow \infty$. For simplicity, let us recall here
the result in the case when the potential $V$ has two local minima $m_\pm$ and a saddle point $s_0$ of signature $(n-1, 1)$, although a far more general result
is given in~\cite{HHS11}, in the Kramers-Fokker-Planck case. Let $S_\pm = V(s_0) - V(m_\pm)$ and $S = \min(S_+, S_-)$. Then there exists $c>0$ and  $h_0 >0$
such that for all $0<h\leq h_0$ there are precisely two eigenvalues of $K$ in the open disc $D(0,ch)$, one of them being $\mu_1=0$, and the second one is real
positive, given by
\begin{equation} \label{mu2}
\mu_2 = h l(h) e^{-S/h},
\end{equation}
where $l(h) = l_0 + h l_1 + h^2 l_2+ h^3 l_3... $, with $l_0>0$.

\bigskip
\noindent
We shall now discuss the notion of a supersymmetric structure more precisely. Let $X$ be $\real^n$ or a smooth compact manifold of dimension $n$, equipped with a
smooth strictly positive volume density $\omega(dx)$. We can then choose local coordinates $x_1,..,x_n$ near any given point on $X$ such that
$\omega (dx)=dx_1\cdot ...\cdot dx_n$.

\medskip
\noindent
Let $d:C^\infty (X;\wedge ^kT^*X)\to C^\infty (X;\wedge ^{k+1}T^*X)$ be the de Rham complex, and let
$\delta :C^\infty (X;\wedge^{k+1}TX)\to C^\infty (X;\wedge ^kTX)$ be the adjoint of $d$ with respect to the natural pointwise duality between
$\wedge^kT^*X$ and $\wedge^kTX$, integrated against $\omega$. In the special local coordinates above, we have
\begin{equation}
\label{ddelta}
d=\sum_{j=1}^n dx_j^\wedge\circ\frac{\partial }{\partial x_j},\quad \delta =-\sum_{j=1}^n \frac{\partial }{\partial x_j}\circ dx_j^\rfloor
\end{equation}

\medskip
\noindent
Let $A(x):T_x^*X\to T_xX$ be a linear map, depending smoothly on $x\in X$. The $k$--fold exterior product $\wedge^kA$ maps $\wedge^kT_x^*X\to
\wedge^kT_xX$, and by abuse of notation, we shall sometimes write simply $A$ instead of $\wedge^kA$. By convention, $\wedge^0 A$ is the identity map on
${\bf R}$. Associated with $A$, we have the bilinear product on the space of compactly supported smooth $k$--forms,
\begin{equation}
\label{bil}
(u|v)_A=(Au|v)= \int (A(x)u(x)|v(x))\omega (dx),\ u,v\in C_0^\infty (X;\wedge^kT_x^*X).
\end{equation}
Since we shall restrict the attention to real-valued sections and operators, there is no need for complex conjugations in (\ref{bil}). When the map $A(x)$ is
bijective for each $x$, there is a natural way of defining the formal adjoint of a linear operator $B$ taking $k$--forms to $\ell$--forms on $X$. This adjoint,
denoted by $B^{A,*}$, maps $\ell$--forms to $k$--forms and is given by
$$
(Bu|v)_A=(u|B^{A,*}v)_A,
$$
and more explicitly, by the following expression,
$$
B^{A,*}=(\wedge^k A^{\mathrm{t}})^{-1} B^* \wedge^{\ell} A^{\mathrm{t}}.
$$
Here $B^*$ is the usual adjoint taking $\ell$--vectors to $k$--vectors. Notice that when $k=0$, we can still define the adjoint $B^{A,*}$,
even when $A$ is not everywhere invertible. In the case when $B=d$, the de Rham differentiation, we get
$$
d^{A,*}=(A^{\mathrm{t}})^{-1} \delta A^{\mathrm{t}},
$$
and for the adjoint of the restriction of $d$ to $0$--forms, we get
$$
d^{A,*}=\delta A^\mathrm{t}.
$$
We can write, in the special local coordinates above,
$$
d^{A,*} = -\sum_{j=1}^n \frac{\partial }{\partial x_j}\circ dx_j^\rfloor \circ A^\mathrm{t}.
$$

\bigskip
\noindent
Let $\varphi\in C^{\infty}(X;\real)$ and let us introduce the Witten complex, given by the weighted semiclassical de Rham differentiation
\begin{equation}
\label{dphi}
d_{\varphi,h} =e^{-\varphi/h} \circ hd \circ e^{\varphi/h} = hd + (d\varphi)^\wedge: C^{\infty}_0(X; \wedge^k T^*X) \rightarrow C^{\infty}_0(X; \wedge^{k+1} T^*X).
\end{equation}
Considering the adjoint of $d_{\varphi,h}$ with respect to the bilinear product (\ref{bil}), we can introduce the operator,
\begin{equation}
\label{dphiAstar}
d_{\varphi,h}^{A,*} = e^{\varphi/h}\circ hd^{A,*} \circ e^{-\varphi/h},
\end{equation}
which on $1$--forms becomes,
$$
d_{\varphi,h}^{A,*} = (h\delta + d\varphi^\rfloor)\circ A^\mathrm{t}.
$$

\bigskip
\noindent
We are now able to give a precise definition of the notion of a supersymmetric differential operator. In doing so, it will be convenient to distinguish between the
non-semiclassical case, corresponding to the situation where the parameter $h>0$ is kept fixed, say, $h=1$, and the semiclassical case, where we shall let
$h\in (0,h_0]$, $h_0>0$.

\begin{dref}
{\rm (i)} (The non-semiclassical case.) Let $P=P(x,D_x)$ be a second order scalar real differential operator on $X$, with $C^{\infty}$--coefficients.
We say that $P$ has a supersymmetric structure on $X$ if there exists a linear map $A(x): T^*_x X \rightarrow T_x X$, smooth in $x\in X$, and functions
$\varphi,\psi \in C^{\infty}(X;\real)$ such that
$$
P = d_{\psi}^{A,*}d_{\varphi}.
$$
Here $d_{\varphi} = d_{\varphi,1}$, and similarly for $d_{\psi}$. \\
{\rm (ii)} (The semiclassical case.) Let $P = P(x,hD_x;h)$ be a second order scalar real semiclassical differential operator on $X$. We say that $P$
has a supersymmetric structure on $X$ in the semiclassical sense if there exists a linear $h$--dependent map $A(x;h): T^*_x X \rightarrow T_x X$,
smooth in $x\in X$, and functions $\varphi=\varphi(x;h)$ and $\psi=\psi(x;h)$, smooth in $x$, such that $\varphi=\varphi_0(x) + {\cal O}(h)$,
$\psi=\psi_0(x) + {\cal O}(h)$ in the $C^{\infty}$--sense, and for which we have
\begeq
\label{eq_super_sem}
P = d_{\psi,h}^{A,*}d_{\varphi,h},
\endeq
for all $h\in (0,h_0]$, $h_0>0$.
\end{dref}

\medskip
\noindent
{\it Remark}. Let us notice that no control on the behavior of $A(x;h)$ is required, as $h\rightarrow 0$, in the semiclassical case in Definition 1.1.
It is an interesting problem to determine when we can find $A(x;h)$ with some uniform control as $h\rightarrow 0$. As we shall see from the proof
of Theorem \ref{supsym}, the symmetric part of $A$ is immediate to determine, while the antisymmetric part arises as a solution of a $\delta$--problem with
exponential weights.

\bigskip
\noindent
Assuming that the linear map $A(x;h)$ is invertible for each $x\in X$ and all $h\in (0,h_0]$, let us put in an arbitrary degree $k\geq 0$,
\ekv{all.13bis}
{\Box_{A,\psi,\varphi}^{(k)} = d_{\psi,h}^{A,*}d_{\varphi,h} + d_{\varphi,h}d_{\psi,h}^{A,*}.}
The supersymmetric semiclassical operator $P$ in (\ref{eq_super_sem}) agrees with the restriction of $\Box_{A,\psi,\varphi}$ to $0$--forms.
The analysis of the eigenvalue splitting of $P$ strongly depends on the following formal intertwining relations,
\ekv{all.14}{
\Box_{A,\psi ,\varphi}^{(k+1)} d_{\varphi,h} = d_{\varphi,h} \Box_{A,\psi ,\varphi}^{(k)},\quad
d_{\psi,h}^{A,*}\Box_{A,\psi ,\varphi}^{(k+1)}=\Box_{A,\psi ,\varphi}^{(k)}d_{\psi,h}^{A,*}.}

\bigskip
\noindent
As a clarifying example, let us now come back to the case of the Witten Laplacian $W$ in (\ref{witten}). A straightforward computation shows that the operator
$W$ enjoys a semiclassical supersymmetric structure in the sense of Definition 1.1,
$$
W= (hd +  (d V)^\wedge)^{{\mathrm I},*} (hd + (dV)^\wedge) = d_{V,h}^{{\mathrm I},*} d_{V,h},
$$
with $\varphi = \psi = V$, and $A= {\mathrm I}$. In the case of the Kramers-Fokker-Planck operator $K$, the semiclassical supersymmetric
structure was observed in~\cite{Bi05},~\cite{BL08}, and in~\cite{TaTaKu06}. It will be recalled in the next section, following~\cite{HHS08a}.

\bigskip
\noindent
The first result of this paper is a necessary and sufficient condition for a second order real differential operator to be supersymmetric, both in the
non-semiclassical and in the semiclassical sense, either locally or globally on $X$.

\begin{theo}
\label{supsym}
{\rm (i)} (The non-semiclassical case). Let $P$ be a second order scalar real differential operator on $X$, which can be written
in local coordinates as follows,
$$
P=-\sum_{j,k=1}^n \partial_{x_j}\circ B_{j,k}(x) \circ \partial_{x_k} + \sum_{j=1}^n v_j(x)\circ \partial_{x_j}+v_0,
$$
where $(B_{j,k})$ is symmetric, and $(B_{j,k})$, $v_j$, $v_0$ are real-valued and smooth. In order for $P$ to have a supersymmetric structure on $X$
it is necessary that there should exist $\varphi,\psi \in C^{\infty}(X;\real)$ such that
\begeq
\label{all.11bis}
P(e^{-\varphi})=0,\quad P^*(e^{-\psi})=0.
\endeq
Here $P^*$ is the formal adjoint of $P$ with respect to the $L^2$--scalar product determined by the density of integration $\omega(dx)$. If
{\rm (\ref{all.11bis})} holds and the $\delta$--complex is exact in degree $1$ for smooth sections, then $P$ has a supersymmetric structure in the
sense of Definition {\rm 1.1}. \\
{\rm (ii)} (The semiclassical case.) Let $P$ be a second order scalar real semiclassical differential operator on $X$, which can be written
in local coordinates as follows,
\begeq
\label{thm1.2}
P=-\sum_{j,k=1}^n h\partial _{x_j} \circ B_{j,k}(x;h) \circ h\partial _{x_k} + \sum_{j=1}^n v_j(x;h)\circ h\partial _{x_j}+ v_0(x;h),
\endeq
where $(B_{j,k})$ is symmetric, and $(B_{j,k})$, $v_j$, $v_0$ are real-valued, $h$--dependent, smooth in $x\in X$. In order for $P$ to have a
supersymmetric structure on $X$ in the semiclassical sense it is necessary that there should exist functions $\varphi(x;h),\psi(x;h)$ smooth in $x\in X$, with
$\varphi=\varphi_0(x) + {\cal O}(h)$, $\psi = \psi_0(x) + {\cal O}(h)$ in the $C^{\infty}$--sense, such that
\begeq
\label{semicl1}
P(e^{-\varphi/h})=0,\quad P^*(e^{-\psi/h})=0,
\endeq
for all $h\in (0,h_0]$, $h_0>0$. If {\rm (\ref{semicl1})} holds and the $\delta$--complex is exact in degree $1$ for smooth sections, then $P$ has a
semiclassical supersymmetric structure in the sense of Definition {\rm 1.1}.
\end{theo}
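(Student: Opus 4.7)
The necessity in both cases is immediate from the defining formulas. Indeed, by (\ref{dphi}), $d_{\varphi,h}=e^{-\varphi/h}\circ hd\circ e^{\varphi/h}$ on $0$--forms, so $d_{\varphi,h}(e^{-\varphi/h})=e^{-\varphi/h}\, hd(1)=0$, hence $P(e^{-\varphi/h})=d_{\psi,h}^{A,*}d_{\varphi,h}(e^{-\varphi/h})=0$. Computing the standard $L^2$ adjoint of the first order operator $d_{\psi,h}^{A,*}=(h\delta+d\psi^\rfloor)\circ A^{\mathrm{t}}$ on $1$--forms yields $(d_{\psi,h}^{A,*})^*=A\circ d_{\psi,h}$, and the same conjugation argument gives $(d_{\psi,h}^{A,*})^*(e^{-\psi/h})=0$, from which $P^*(e^{-\psi/h})=0$ follows. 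This handles both cases at once.

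For sufficiency the plan is to determine $A$ in two steps, symmetric and antisymmetric part. First, comparing principal symbols, $d_{\psi,h}^{A,*}d_{\varphi,h}$ has principal part $-\sum (A^s)_{jk} h\partial_j h\partial_k$ where $A^s=(A+A^{\mathrm{t}})/2$, and the antisymmetric part $A^a$ contributes no second order term. Hence we are forced to set $A^s=B$, and then $Q:=P-d_{\psi,h}^{B,*}d_{\varphi,h}$ is a first order operator which, by the necessity computation applied to $d_{\psi,h}^{B,*}d_{\varphi,h}$ and by hypothesis on $P$, satisfies both $Q(e^{-\varphi/h})=0$ and $Q^*(e^{-\psi/h})=0$.

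Next I would write $Q=\sum_k V^k h\partial_k + q_0$ with $V=\sum V^k\partial_k$ a smooth vector field. The equation $Q(e^{-\varphi/h})=0$ forces $q_0=\langle V,d\varphi\rangle$, while $Q^*(e^{-\psi/h})=0$ forces $q_0=-h\,\delta V-\langle V,d\psi\rangle$. Equating the two gives
\[
h\,\delta V+\langle V,d(\varphi+\psi)\rangle=0,\quad\text{i.e.}\quad \delta\bigl(e^{-(\varphi+\psi)/h}V\bigr)=0.
\]
By the exactness assumption of the $\delta$--complex in degree one, there is a smooth $2$--vector $\eta_0$ with $\delta\eta_0=e^{-(\varphi+\psi)/h}V$. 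Setting $\eta:=-h\,e^{(\varphi+\psi)/h}\eta_0$ and letting $A^a$ be the associated antisymmetric matrix, a direct computation, analogous to the principal symbol one, gives that the first order part of $d_{\psi,h}^{A^a,*}d_{\varphi,h}$ equals $-e^{(\varphi+\psi)/h}\,h\delta(e^{-(\varphi+\psi)/h}\eta)\cdot hd=hV$, so it matches the first order part of $Q$. Since both $d_{\psi,h}^{A^a,*}d_{\varphi,h}$ and $Q$ annihilate $e^{-\varphi/h}$, their zeroth order parts must also agree, and we conclude $P=d_{\psi,h}^{A,*}d_{\varphi,h}$ with $A=B+A^a$.

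The main obstacle is the bookkeeping in the last step: the identification of an antisymmetric endomorphism of $T^*X\to TX$ with a $2$--vector, and the verification that under this identification the first order operator $d_{\psi,h}^{A^a,*}d_{\varphi,h}$ is exactly $-e^{(\varphi+\psi)/h}h\delta\bigl(e^{-(\varphi+\psi)/h}\eta\bigr)\rfloor hd$ acting on scalars. Once this formula is in hand, solving for $A^a$ reduces to the $\delta$--problem above, and the exactness hypothesis is used precisely there; smoothness of $A^a$ in $x$ follows from smoothness of the solution $\eta_0$, with no uniformity in $h$ needed in the semiclassical case, as allowed by Definition~1.1.
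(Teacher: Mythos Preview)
Your argument is correct and follows essentially the same route as the paper. The only organizational difference is that the paper first conjugates, setting $\widetilde P=e^{-\psi}\circ P\circ e^{-\varphi}$ so that $\widetilde P(1)=\widetilde P^*(1)=0$, and then solves the unweighted $\delta$--problem $\delta v=0\Rightarrow v=-\tfrac12\,\delta\Gamma$ for the antisymmetric part $\widetilde C$ of $\widetilde A$, recovering $A=e^{\varphi+\psi}\widetilde A$ at the end; you instead keep $\varphi,\psi$ in place and arrive at the equivalent weighted equation $\delta\bigl(e^{-(\varphi+\psi)/h}V\bigr)=0$. Either way the exactness hypothesis is invoked at exactly the same point, and the symmetric part is forced to be $B$ by the principal symbol in both arguments.
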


\medskip
\noindent
{\it Remark}. Assume that $X$ is a compact Riemannian manifold and that $P = -\Delta + v$, where $v$ is a smooth real vector field. Then
{\rm (\ref{all.11bis})} holds, with $\varphi=\psi=0$, precisely when the vector field $v$ is divergence free.

\bigskip
\noindent
As a fundamental example of a second order semiclassical operator for which the supersymmetric structure may break down, we shall now discuss
the case of an operator associated to a model of a chain of anharmonic oscillators coupled to two heat baths. Referring to the next section
for the construction of this model, let us consider here the following real semiclassical differential operator of order two,
\begin{multline}
\label{eq1intro}
\widetilde{P}_W = \frac{\gamma}{2}\sum_{j=1}^2 \alpha_j(-h\partial _{z_j})\left(h\partial_{z_j}+\frac{2}{\alpha_j}(z_j-x_j)\right) \\
 + y\cdot h\partial_x - (\partial_x W(x)+x-z)\cdot h\partial_y.
\end{multline}
Here $x=(x_1,x_2)$, $y=(y_1,y_2)$, and $z=(z_1,z_2)$ belong to $\real^{2n}$. Furthermore, $\gamma>0$, $\alpha_j>0$, $j=1,2$. The effective
potential $W$ is of the form $W(x) = W_1(x_1) + W_2(x_2) + \delta W(x_1,x_2)$, where $W_j \in C^{\infty}(\real^n;\real)$, $j=1,2$, and
$\delta W\in C^{\infty}(\real^{2n};\real)$. The parameters $\alpha_j$ are proportional to the temperatures in the baths, $j=1,2$. As we shall see,
the structure of operator $\widetilde{P}_W$ is completely different depending on whether $\alpha_1 = \alpha_2$ or not. The following
is the second main result of this work --- see also Theorem \ref{thm_main} below for a more precise statement.

\begin{theo}
\label{nosupchains}
Consider the operator $\widetilde{P}_W$ defined in {\rm (\ref{eq1intro})}. We have the following two cases.
\begin{enumerate}[i)]
\item If $\alpha_1=\alpha_2$ (the equilibrium case) or $\delta W \equiv 0$ (the decoupled case), then $\widetilde{P}_W$ has a semiclassical
supersymmetric structure on $\R^{6n}$, in the sense of Definition {\rm 1.1}.
\item Take $\gamma=1$ and assume that $\alpha_1 \neq \alpha_2$. Then there exist a Morse function $W_1(x_1)$ with two local minima and a saddle point, a positive definite quadratic
form $W_2(x_2)$, and $\delta W(x_1,x_2) \in C_0^\infty(\R^{2n})$, arbitrarily small in the uniform norm, such that the operator
$\widetilde{P}_W$ has no supersymmetric structure in the semiclassical sense, on $\R^{6n}$, in the sense of Definition {\rm 1.1}.
\end{enumerate}
\end{theo}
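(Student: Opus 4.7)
My plan is to invoke Theorem~\ref{supsym}(ii); the hypothesis on exactness of the $\delta$-complex is satisfied on $\R^{6n}$ by the Poincar\'e lemma. Consequently, $\widetilde P_W$ admits a semiclassical supersymmetric structure on $\R^{6n}$ if and only if there exist $\varphi,\psi \in C^\infty(\R^{6n})$, depending smoothly on $h\in(0,h_0]$ with classical expansions $\varphi=\varphi_0+\mathcal O(h)$, $\psi=\psi_0+\mathcal O(h)$, such that $\widetilde P_W(e^{-\varphi/h})=0$ and $\widetilde P_W^*(e^{-\psi/h})=0$. A direct conjugation computation gives
\begin{equation*}
e^{\varphi/h}\widetilde P_W\bigl(e^{-\varphi/h}\bigr)=\gamma\sum_{j=1}^2\left[\tfrac{\alpha_j h}{2}\Delta_{z_j}\varphi-\tfrac{\alpha_j}{2}|\nabla_{z_j}\varphi|^2-hn+(z_j-x_j)\cdot\nabla_{z_j}\varphi\right]-y\cdot\nabla_x\varphi+(\partial_x W+x-z)\cdot\nabla_y\varphi,
\end{equation*}
and the analogue for $\widetilde P_W^*$ is obtained by flipping the sign of the transport drift $(-y,\partial_x W+x-z)$.

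For part~(i), introduce the effective Hamiltonian $H:=\tfrac12|y|^2+W(x)+\tfrac12\sum_{j=1}^2|z_j-x_j|^2$. In the equilibrium case $\alpha_1=\alpha_2=\alpha$, the ansatz $\varphi=\psi=(2/\alpha)H$ makes the right-hand side above vanish identically: the transport piece $V\cdot\nabla H$ is zero because the $(x,y)$-drift is the Hamiltonian flow of $H$ (with $z$ held fixed), while the $z$-contribution reduces to $\gamma\sum_j c(1-c\alpha/2)|z_j-x_j|^2+\gamma hn\sum_j(c\alpha/2-1)$, which vanishes at $c=2/\alpha$. The sign flip in $\widetilde P_W^*$ does not alter the cancellation, so $\psi=\varphi$ works as well. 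In the decoupled case $\delta W\equiv 0$, the operator splits as $\widetilde P_W=P^{(1)}+P^{(2)}$ over the independent coordinate groups $(x_j,y_j,z_j)$, and the same calculation applied to each single-bath summand yields $\varphi=\psi=\sum_{j=1}^2(2/\alpha_j)H_j$ with $H_j:=\tfrac12|y_j|^2+W_j(x_j)+\tfrac12|z_j-x_j|^2$; since exponentials of sums factor, both equations hold.

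For part~(ii), the $h^0$ terms of $\widetilde P_W(e^{-\varphi/h})=0$ give the Hamilton--Jacobi (eikonal) equation
\begin{equation*}
\tfrac12\sum_{j=1}^2\alpha_j|\nabla_{z_j}\varphi_0|^2=V\cdot\nabla\varphi_0,\qquad V:=(-y,\;\partial_x W+x-z,\;z-x),
\end{equation*}
with an analogous equation for $\psi_0$ using the sign-flipped drift $V'$. I take $W_1$ Morse on $\R^n$ with two local minima $m_\pm$ and one index-one saddle, $W_2(x_2)=\tfrac12|x_2|^2$, and $\delta W\in C_0^\infty(\R^{2n})$ of arbitrarily small uniform norm to be chosen. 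The critical points of $V$ are exactly the points $X_*=(x_*,0,x_*)$ with $\nabla W(x_*)=0$. Near each such $X_*$, writing $\varphi_0=\varphi_0^*+\tfrac12\langle QU,U\rangle+\mathcal O(|U|^3)$ in $U=X-X_*$ with $Q=Q^{\mathrm t}$ real, the quadratic part of the eikonal becomes the algebraic matrix Riccati equation
\begin{equation*}
Q\,\tilde A\,Q=M^{\mathrm t}Q+QM,
\end{equation*}
where $M=dV(X_*)$ is the $6n\times 6n$ linearized drift depending on $\mathrm{Hess}\,W(x_*)$, and $\tilde A$ is the block-diagonal matrix carrying $\mathrm{diag}(\alpha_1 I_n,\alpha_2 I_n)$ on the $z$-block and zeros elsewhere.

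The main obstacle is to exhibit a concrete $\delta W$ for which no real symmetric $Q$ solves this Riccati at some critical point of $V$. My plan is to place $\delta W$ near the minimum $(m_+,0)$ of the decoupled potential $W_1+W_2$, with Hessian there equal to $\epsilon B$ for a specific symmetric matrix $B$ off-diagonal in the $(x_1,x_2)$-splitting; for $B=0$ the decoupled construction from part~(i) provides an explicit solution $Q_0=\mathrm{Hess}\bigl(\sum_j(2/\alpha_j)H_j\bigr)(X_*)$, so the problem reduces to a perturbation analysis in $\epsilon$. The real symmetric solutions of the Riccati correspond to the real Lagrangian subspaces of $\R^{12n}$ invariant under the Hamiltonian matrix $\bigl(\begin{smallmatrix}M&-\tilde A\\0&-M^{\mathrm t}\end{smallmatrix}\bigr)$; the strategy is to show that the first $\epsilon$-variation of the Lagrangian through $Q_0$ destroys its symmetry precisely when $\alpha_1\ne\alpha_2$, because $\tilde A$ and the coupling through $B$ do not interact compatibly in that regime. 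The parallel equation for $\psi_0$ must be checked simultaneously. I expect this algebraic computation at a single critical point, together with the compatibility check between the $\varphi$- and $\psi$-Riccatis, to be the main obstacle of the proof; since the obstruction is local, its failure rules out any globally smooth $\varphi_0$ on $\R^{6n}$ and completes part~(ii).
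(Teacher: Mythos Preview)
Your treatment of part~(i) is correct and matches the paper's approach: one invokes Theorem~\ref{supsym}(ii), notes that the $\delta$--complex is exact on $\R^{6n}$, uses $\widetilde P_W^*(1)=0$ (so $\psi=0$ suffices), and exhibits the explicit Maxwellian $e^{-2\varphi/h}$ in each of the two cases. (The paper simply takes $\psi=0$ rather than your $\psi=\varphi$; either choice is fine.)

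Part~(ii), however, has a genuine gap. Your strategy is to look for a \emph{local} obstruction at a single critical point of the drift $V$, by arguing that the matrix Riccati equation $Q\tilde A Q=M^{\mathrm t}Q+QM$ for the Hessian of $\varphi_0$ has no real symmetric solution once $\delta W$ introduces off-diagonal coupling in $M=dV(X_*)$. This will not work. At every nondegenerate critical point $X_*$ the associated Hamiltonian matrix (in your notation, the block matrix with rows $(M,-\tilde A)$ and $(0,-M^{\mathrm t})$) has spectrum disjoint from $i\R$; this is precisely what the paper establishes via the fundamental matrix $F_{q_0}$ and its eigenvalues $\pm i^{-1}\lambda_k$ with $\Re(i^{-1}\lambda_k)>0$. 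The stable and unstable subspaces are then real Lagrangian graphs, and each yields a real symmetric solution of the Riccati. In particular, the stable manifold theorem guarantees a smooth local solution of the eikonal equation near every critical point, with positive definite Hessian at the minima. A small perturbation of $M$ by $\epsilon B$ deforms these invariant Lagrangian subspaces smoothly, and they remain Lagrangian (hence the corresponding $Q$ remains symmetric); your proposed mechanism ``the $\epsilon$-variation destroys the symmetry'' cannot occur.

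The paper's obstruction is of a completely different, \emph{global} nature. One chooses $\delta W$ supported \emph{away} from both the minimum $M_+$ and the saddle $S_0$ (so the local Riccati problems there are unchanged and trivially solvable), and then shows that the unique local solution $\varphi=\varphi_0$ near $\widetilde M_+$ cannot be continued smoothly up to $\widetilde S_0$. Writing $\varphi=\varphi_0+\psi$, one reduces (after a Taylor expansion in the $w_2=(x_2,y_2,z_2)$ variables, using that $\delta W$ is a homogeneous polynomial of degree $m\ge3$ in $x_2$) to the linear transport equation
\[
\nu\,\psi_m \;=\; 2\,\partial_x\delta W\cdot\partial_y\varphi_0
\]
along the vector field $\nu=H_{q_0}|_{\Lambda_{\varphi_0}}$. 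The key geometric input is the existence of a heteroclinic trajectory $\gamma$ of $\nu$ running from $\widetilde M_+$ to $\widetilde S_0$. Integrating the transport equation along $\gamma$ and using that $\alpha_1\neq\alpha_2$ makes the source term nontrivial, one sees that the solution blows up like a noninteger power of the distance to one of the endpoints, so no $C^\infty$ solution exists on any neighbourhood of $\overline{\Omega_+}$. This connecting-orbit argument between two critical points is the essential idea missing from your proposal.
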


\bigskip
\noindent
The study of the tunneling effect for the Witten Laplacian through its supersymmetric properties was first performed in \cite{HeSj85},
following the supersymmetry observation pointed out in \cite{Wit82}. As already mentioned, this structure is of great help in the study of the eigenvalue
splitting, since in particular it allows one to avoid the study near the so--called "non-resonant wells", such as the saddle point in the example mentioned above.
The results were subsequently generalized in \cite{BEGK},\cite{BGK1}, in \cite{HKN08} with a full asymptotic expansion, in
\cite{Lep10a}, \cite{Lep10b} in cases with boundary, and in \cite{Lep11}, \cite{LNV12} in the case of forms of higher degree. Notice
that the computation of the exponentially small eigenvalues is performed here using the singular values of the Witten differential.

\medskip
\noindent
For the Kramers-Fokker-Planck operator, the supersymmetric structure was first observed in \cite{Bi05}, \cite{BL08}, and \cite{TaTaKu06}, and was used for a
complete study of the tunneling effect in \cite{HHS08a} and \cite{HHS11}. This structure helps substantially in this non-selfadjoint setting, and using also
an additional symmetry of a PT--type, we were able to compute the singular values, in order to get results in the form given in (\ref{mu2}).
Let us also remark that if the supersymmetric method is not available, a natural approach to the tunneling analysis would proceed by
means of exponentially weighted estimates of Agmon type, where the notion of the Agmon distance should be replaced by a degenerate
Finsler distance, see~\cite{LiNi}.

\medskip
\noindent
In fact in \cite{HHS08a}, \cite{HHS08b}, \cite{HHS11}, it was shown, under additional assumptions, that the tunneling of any supersymmetric operator of order two
could be studied using the method given there, and the case of chains of oscillators coupled to two heat baths was given there as an example. This model was
first introduced in \cite{EPR99} and was shown to be supersymmetric in the case of the equal temperatures in \cite{HHS08b}. For general second order
non-negative operators, the question of intrinsic geometric structures and their relations with the spectral gap in asymptotic regimes was also studied in
\cite{FHPS10} in the elliptic case, and this article was one of the motivations for our study of supersymmetric hidden structures here.

\medskip
\noindent
The paper is organized as follows. In Section 2, we first give a necessary and sometimes sufficient condition for a general real partial
differential operator of order two to be supersymmetric, both in the non-semiclassical and the semiclassical cases, and establish Theorem \ref{supsym}.
We then review the three supersymmetric examples and in particular, discuss the equilibrium and the uncoupled cases for the chains (case i) of
Theorem \ref{nosupchains}. Section 3 is devoted to the study of the case of different temperature for the chains and to the proof of part ii) of
Theorem \ref{nosupchains}.

\section{Generalities on supersymmetric structures and some examples}\label{all}
\setcounter{equation}{0}
In the beginning of this section we shall establish Theorem \ref{supsym}. Let us begin with the non-semiclassical case $h=1$.
Let $X$ be $\real^n$ or a smooth compact manifold of dimension $n$, equipped with a smooth strictly positive volume density $\omega (dx)$. Assume that local
coordinates $x_1,..,x_n$ have been chosen so that $\omega (dx)=dx_1\cdot ...\cdot dx_n$.
If
\ekv{all.1}
{v=\sum_{j=1}^n v_j(x)\partial _{x_j}}
is a smooth vector field on $X$, then the divergence of $v$ is well defined by the choice of $\omega$ and in the special coordinates above, we have
\ekv{all.2}{\mathrm{div\,}v=\sum_{j=1}^n \partial _{x_j}(v_j).}

\medskip
\noindent
Let $d$ be the de Rham exterior differentiation and let $\delta$ be its adjoint, defined in the introduction. Then using
the local expression (\ref{ddelta}), we obtain that
$$
\delta v =-\mathrm{div\,}(v).
$$

\medskip
\noindent
Let now $A(x): T_x^*X\to T_xX$ depend smoothly on $x\in X$, and let us consider the bilinear product defined in (\ref{bil}) and the induced action of
$A$ on $k$--forms. We now let $P$ be a second order real differential operator on $X$, which we can write in the special local coordinates above as follows,
\ekv{all.3}
{
P=-\sum_{j,k=1}^n \partial _{x_j}\circ B_{j,k}(x)\circ \partial _{x_k}+\sum_{j=1}^n v_j(x)\circ\partial _{x_j}+v_0.
}
Here $(B_{j,k})$ is symmetric and $B_{j,k}$, $v_j$ and $v_0$ are real-valued and smooth. Viewing $P$ as acting on $0$--forms, we first ask
whether there exists a smooth map $A(x)$ as above, such that
\ekv{all.4}
{
P=d^{A,*}d=\delta A^\mathrm{t}d,
}
either locally or globally on $X$.

\begin{prop}\label{all1}
In order to have {\rm (\ref{all.4})}, it is necessary that \ekv{all.5} {P(1)=0\hbox{ and }P^*(1)=0.  } Here $P^*$ is the adjoint of our
scalar operator $P$ with respect to the $L^2$--scalar product determined by the density of integration $\omega(dx)$.
\end{prop}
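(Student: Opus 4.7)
The plan is to exploit directly the factorized form $P=\delta A^{\mathrm t}d$ acting on $0$-forms, together with the observation that the de Rham differential $d$ kills constants. Since $d(1)=0$, the very first necessary condition $P(1)=0$ is immediate: indeed, $P(1)=\delta\bigl(A^{\mathrm t}d(1)\bigr)=\delta(0)=0$, no matter what the map $A$ is. So the work is entirely concentrated on the identification of $P^*$.

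Next, I would compute $P^*$ in a way that shows it factors through $d$ as well. The clean abstract route is to chase the bilinear pairing. For $f,g\in C_0^\infty(X;\real)$, using the adjointness between $d$ and $\delta$ built into (\ref{ddelta}), one has
$$
\int (Pf)\,g\,\omega=\int (\delta A^{\mathrm t}df)\,g\,\omega=\int \langle A^{\mathrm t}df,dg\rangle\,\omega,
$$
where $\langle\,\cdot\,,\,\cdot\,\rangle$ denotes the pointwise pairing $T_xX\times T_x^*X\to\real$. Since $A^{\mathrm t}$ is the pointwise transpose of $A$ with respect to this pairing, one has $\langle A^{\mathrm t}df,dg\rangle=\langle Adg,df\rangle$ pointwise, and applying the same adjoint relation between $\delta$ and $d$ once more yields
$$
\int (Pf)\,g\,\omega=\int \langle Adg,df\rangle\,\omega=\int f\,(\delta A dg)\,\omega.
$$
Hence $P^*=\delta A d$, exactly the same factorization through $d$ but with $A^{\mathrm t}$ replaced by $A$. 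A quick check in the special coordinates where $\omega=dx_1\cdots dx_n$ confirms this: the explicit form $P=-\sum_{i,j}\partial_{x_i}\circ A^{ji}\circ\partial_{x_j}$ has formal adjoint $-\sum_{i,j}\partial_{x_i}\circ A^{ij}\circ\partial_{x_j}$, which is $\delta Ad$.

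With the identity $P^*=\delta A d$ in hand, the second necessary condition is again immediate from $d(1)=0$: $P^*(1)=\delta A d(1)=0$. That closes the proof. There is no real obstacle here—the only subtle point is the careful bookkeeping of transposes in the two factors (pointwise transposition of $A$ versus the $L^2$-adjointness of $d$ and $\delta$), so in the writeup I would include the short coordinate verification in the special local coordinates where $\omega(dx)=dx_1\cdots dx_n$, to make it transparent that $P^*$ is of the form $\delta Ad$ and hence annihilates the constant function $1$ for the same reason that $P$ does.
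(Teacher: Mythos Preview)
Your argument is correct and follows exactly the paper's approach: the paper's proof simply observes $P(1)=0$ from $d(1)=0$, asserts $P^*=\delta A d$, and concludes $P^*(1)=0$. You have merely supplied the details behind the identity $P^*=\delta A d$, which the paper states without justification.
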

\begin{proof}
If (\ref{all.4}) holds, then clearly $P(1)=0$, since $d(1)=0$. Since $P^*=\delta Ad$, we also have $P^*(1)=0$.
\end{proof}

\begin{prop}\label{all2}
The property {\rm (\ref{all.5})} is equivalent to the following property,
\ekv{all.6}
{
v_0=0 \hbox{ and }\mathrm{div\,}v=0.
}
\end{prop}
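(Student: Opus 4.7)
The proof is an essentially direct computation that exploits the fact that the constant function $1$ is annihilated by every differential operator involving a nontrivial partial derivative. I would organize it in two short steps, one for $P(1)$ and one for $P^*(1)$.

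First I would apply $P$ directly to the constant function. In the expression (\ref{all.3}), both $-\partial_{x_j}\circ B_{j,k}\circ \partial_{x_k}$ and $v_j\circ \partial_{x_j}$ contain an innermost $\partial_{x_k}$ or $\partial_{x_j}$ acting on $1$ and hence give zero, so
$$
P(1)=v_0.
$$
Therefore $P(1)=0$ is equivalent to the condition $v_0=0$.

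Next I would compute the formal adjoint $P^*$ with respect to the Lebesgue-like density $\omega(dx)=dx_1\cdots dx_n$ fixed in the special local coordinates. Two integrations by parts against a test function give
$$
P^*=-\sum_{j,k=1}^n\partial_{x_j}\circ B_{j,k}\circ \partial_{x_k}-\sum_{j=1}^n \partial_{x_j}\circ v_j+v_0,
$$
where I have used the symmetry of $(B_{j,k})$ for the second order part. Applying $P^*$ to $1$, the second order term vanishes for the same reason as before, while the first order term produces $-\sum_j \partial_{x_j}v_j=-\mathrm{div}\,v$ (the derivatives now fall onto the coefficients). Hence
$$
P^*(1)=v_0-\mathrm{div}\,v.
$$
Combined with the previous step, $P(1)=0$ and $P^*(1)=0$ is equivalent to $v_0=0$ together with $\mathrm{div}\,v=0$, which is exactly (\ref{all.6}).

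There is no real obstacle here: the argument is a one-line calculation for $P(1)$ and a textbook integration by parts for $P^*$. The only points that require care are keeping track of signs when moving the derivatives off the test function and using the symmetry of $(B_{j,k})$ to recognize that the second order part is formally self-adjoint (and therefore contributes nothing to the discrepancy between $P$ and $P^*$ applied to constants). Since both $v_0$ and $\mathrm{div}\,v$ are defined intrinsically from the operator and the density $\omega$, the resulting characterization is coordinate independent.
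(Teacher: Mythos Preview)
Your argument is correct and follows exactly the same route as the paper: compute $P(1)=v_0$ directly from (\ref{all.3}), then form $P^*$ by integration by parts in the special coordinates to obtain $P^*(1)=v_0-\mathrm{div}\,v$, and read off the equivalence. The only cosmetic difference is that the paper writes the first order part of $P^*$ as $-\sum_j v_j\partial_{x_j}-\mathrm{div}\,v$ rather than $-\sum_j\partial_{x_j}\circ v_j$, which is the same thing.
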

\begin{proof}
In the special local coordinates, we see that
$$
P^*=-\sum_{j,k=1}^n \partial _{x_j}\circ B_{j,k}(x)\circ \partial _{x_k}-\sum_{j=1}^n v_j\circ \partial _{x_j}-\mathrm{div\,}v+v_0.
$$
Thus the property $P(1)=0$ is equivalent to $v_0=0$ and the property $P^*(1)=0$ is equivalent to $v_0-\mathrm{div\,}(v)=0$.
\end{proof}

\bigskip
\noindent
We now look for a smooth map $A=A(x)$ such that (\ref{all.4}) holds. In the special coordinates above, let $A=(A_{j,k}(x))$ be the
matrix of $A$. Then,
$$
\delta A^{\mathrm{t}}d=-\sum_{j,k=1}^n \partial _{x_j}\circ A_{k,j}(x)\circ \partial _{x_k}.
$$
Write $A=\widetilde{B}+C$, where $\widetilde{B}$ is symmetric, $\widetilde{B}^\mathrm{t}=\widetilde{B}$, and $C$ is antisymmetric, $C^{\mathrm{t}}=-C$. Then
we get
\ekv{all.7} { \delta A^\mathrm{t}d=-\sum_{j,k=1}^n\partial
  _{x_j}\circ \widetilde{B}_{j,k}(x)\circ \partial _{x_k}+\sum_{k=1}^n(\sum_{j=1}^n \partial
  _{x_j}(C_{j,k}))\partial _{x_k}.  }
In order to have (\ref{all.4}), we see that $\widetilde{B}$ has to be equal to $B$, and we assume that from now on, so that
\ekv{all.8}{A=B+C,\quad B^\mathrm{t}=B,\ C^\mathrm{t}=-C,} where $B$ is the matrix appearing in (\ref{all.3})
and the antisymmetric matrix $C$ remains to be determined. In order to have (\ref{all.4}) it is necessary and sufficient that
\ekv{all.9}
{v_k=\sum_{j=1}^n\partial _{x_j}(C_{j,k}).}

\medskip
\noindent
Consider the 2-vector form
$$
\Gamma =\sum_{j,k=1}^nC_{j,k}\partial _{x_j}\wedge\partial _{x_k}.
$$
By a straightforward calculation, see (\ref{ddelta}), we obtain that
\[
\begin{split}
\delta \Gamma &=-\sum_{\nu=1}^n\frac{\partial }{\partial x_\nu }dx_\nu ^\rfloor(\sum_{j,k=1}^nC_{j,k}\partial _{x_j}\wedge\partial _{x_k})\\
&=-2\sum_{j,k=1}^n\frac{\partial }{\partial x_j}(C_{j,k})\partial _{x_k},
\end{split}
\]
so (\ref{all.9}) amounts to
\ekv{all.10}{v=-\frac{1}{2}\delta \Gamma.}
As we have seen, the necessary condition (\ref{all.5}) is equivalent to (\ref{all.6}), which contains the assumption that $\mathrm{div\,}v=0$, i.e.
that $\delta v=0$. Since the $\delta$--complex is locally exact in degree 1, we get the following result.

\begin{prop}\label{all3}
If {\rm (\ref{all.5})} holds, then locally, we can find a smooth matrix $A$ such that {\rm (\ref{all.4})} holds. More precisely, we can find a smooth matrix
$A$ so that {\rm (\ref{all.4})} holds, in any open subset $\widetilde{X}\subset X$, where the $\delta$--complex is exact in degree {\rm 1} for smooth sections.
\end{prop}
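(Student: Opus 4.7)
The plan is to observe that all the real work has already been done in the preceding paragraphs, and what remains is a straightforward application of exactness of the $\delta$-complex. By Propositions \ref{all1} and \ref{all2}, condition (\ref{all.5}) forces $v_0 = 0$ and $\mathrm{div}\, v = 0$, i.e.\ $\delta v = 0$. The symmetric part of the sought matrix $A$ is already determined by matching second-order coefficients, so we set $\widetilde B = B$ as in (\ref{all.8}), and it remains only to produce a smooth antisymmetric matrix $C = (C_{j,k})$ satisfying the first-order condition (\ref{all.9}).

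First I would identify $C$ with the $2$-vector field $\Gamma = \sum_{j,k} C_{j,k}\, \partial_{x_j} \wedge \partial_{x_k}$, which is invariantly defined on $X$ independently of the special local coordinates. The computation preceding (\ref{all.10}) shows that (\ref{all.9}) is equivalent to the coordinate-free equation $\delta \Gamma = -2v$. Since $\delta v = 0$ and since, by hypothesis, the $\delta$-complex on multivector fields is exact in degree $1$ for smooth sections over $\widetilde X$, such a smooth $\Gamma$ exists on $\widetilde X$. Reading off its antisymmetric coefficient matrix $C$, setting $A = B + C$, and substituting into (\ref{all.7}) recovers $P = \delta A^{\mathrm{t}} d = d^{A,*} d$ on $0$-forms, which is (\ref{all.4}).

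For the purely local assertion, one only needs to note that local exactness of $\delta$ in degree $1$ is automatic: contraction with the chosen volume form $\omega$ intertwines the $\delta$-complex on $k$-vector fields with the de Rham complex on $(n-k)$-forms, so the Poincaré lemma supplies a smooth solution $\Gamma$ on any sufficiently small open set. Thus there is no genuine obstacle; the only subtlety is that $\Gamma$ is determined only up to a $\delta$-closed $2$-vector field, which is harmless since the construction of $A$ uses only $\delta \Gamma$.
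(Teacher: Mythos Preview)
Your proposal is correct and follows essentially the same route as the paper: the argument in the paper is precisely the discussion preceding the proposition, namely reducing (\ref{all.4}) to the equation $v=-\frac{1}{2}\delta\Gamma$ for an antisymmetric $2$-vector $\Gamma$, and then invoking exactness of the $\delta$-complex in degree $1$ to solve it. The only addition you make is the explicit justification of local exactness via the volume-form isomorphism with the de Rham complex and the Poincar\'e lemma, which the paper simply asserts.
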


\begin{remark}\label{all4}
If $A=B+C$ is a solution to {\rm (\ref{all.4})}, then $A=B+\widetilde{C}$ is another solution if and only if $\widetilde{C}-C$ (identified with a
{\rm 2}-vector form) is $\delta $--closed.
\end{remark}

\bigskip
\noindent
We shall now replace the assumption (\ref{all.5}) by the more general assumption that there exist smooth strictly positive functions
$e^{-\varphi }$ and $e^{-\psi }$ in the kernels of $P$ and $P^*$ respectively,
\ekv{all.11}{P(e^{-\varphi })=0,\ P^*(e^{-\psi })=0.}
Put $\widetilde{P}=e^{-\psi }\circ P\circ e^{-\varphi }$, so that $\widetilde{P}^*=e^{-\varphi }\circ P^*\circ e^{-\psi}$. {\it Then
$\widetilde{P}$ satisfies {\rm (\ref{all.5})}.} Hence if $\widetilde{X}\subset X$ is an open connected subset where the $\delta$--complex
is exact in degree 1, we have a smooth matrix $\widetilde{A}$ on $\widetilde{X}$ such that
\ekv{all.12}
{\widetilde{P}= \delta\widetilde{A}^\mathrm{t}d \hbox{ in }\widetilde{X}.}  Putting
$d_{\varphi} =e^{-\varphi}\circ d \circ e^{\varphi}$, $d_{\psi} =e^{-\psi}\circ  d \circ e^{\psi}$, we get with $A=e^{\varphi +\psi }\widetilde{A}$:
\[
P=e^{\psi} \widetilde{P} e^{\varphi} = e^{\psi} \delta  e^{-\psi }A^\mathrm{t}e^{-\varphi} d e^{\varphi} = d_{\psi} ^*A^\mathrm{t}d_{\varphi} =
(d_{\psi} )^{A,*}d_{\varphi}.
\]

\bigskip
\noindent
Summarizing the discussion so far, we see that we have now established the first part of Theorem \ref{supsym}, addressing the non-semiclassical case. Restoring now
the semiclassical parameter $h\in (0,h_0]$ and recalling the notion of a supersymmetric structure in the semiclassical sense given in Definition 1.1, we see
that the arguments given above can be applied to the conjugated operator $e^{-\psi/h}\circ P(x,hD_x;h) \circ e^{-\varphi/h}$,
where $\varphi=\varphi_0(x) + {\cal O}(h)$, $\psi = \psi_0(x) + {\cal O}(h)$, for each fixed value of $h\in (0,h_0]$. The second statement
of Theorem \ref{supsym} follows and the proof of Theorem \ref{supsym} is therefore complete.

\begin{cor}
Let $P=P(x,hD_x;h)$ be a second order real semiclassical differential operator on $X$, which in local coordinates, can be written as follows,
$$
P = -\sum_{j,k=1}^n h\partial_{x_j}\circ B_{j,k}(x)\circ h\partial_{x_k} + \sum_{j=1}^n v_j(x)\circ h\partial_{x_j} + v_0(x) + hw_0(x).
$$
Here $(B_{j,k})$ is symmetric, and $(B_{j,k})$, $v_j$, $v_0$, $w_0$ are real-valued, smooth, and independent of $h$. Then a necessary condition for the operator
$P$ to be supersymmetric in the semiclassical sense is that there should exist smooth $h$--independent functions $\varphi_0$ and $\psi_0$ satisfying the following
eikonal equations on $X$,
$$
\sum_{j,k=1}^n B_{j,k}(x) \partial_{x_j}\varphi_0 \partial_{x_k}\varphi_0 +\sum_{j=1}^n v_j(x) \partial_{x_j} \varphi_0(x) - v_0(x) = 0,
$$
and
$$
-\sum_{j,k=1}^n B_{j,k}(x) \partial_{x_j}\psi_0 \partial_{x_k}\psi_0 +\sum_{j=1}^n v_j(x) \partial_{x_j} \psi_0(x) + v_0(x) = 0.
$$
\end{cor}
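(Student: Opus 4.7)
The plan is to deduce the eikonal equations directly from the necessary conditions provided by Theorem \ref{supsym}~(ii). Indeed, if $P$ admits a semiclassical supersymmetric structure, then by that theorem there must exist smooth functions $\varphi(x;h)$ and $\psi(x;h)$ with $\varphi = \varphi_0(x) + {\cal O}(h)$ and $\psi = \psi_0(x) + {\cal O}(h)$ in the $C^{\infty}$--sense, satisfying $P(e^{-\varphi/h}) = 0$ and $P^*(e^{-\psi/h}) = 0$ for every $h \in (0,h_0]$. It then suffices to expand these two identities in powers of $h$ and to isolate their leading-order contributions.

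First I would compute $e^{\varphi/h}\circ P\circ e^{-\varphi/h}$ via the elementary identity $e^{\varphi/h}\circ h\partial_{x_k}\circ e^{-\varphi/h} = h\partial_{x_k} - \partial_{x_k}\varphi$, which yields
\[
e^{\varphi/h}\, P\, e^{-\varphi/h} = -\sum_{j,k=1}^n (h\partial_{x_j}-\partial_{x_j}\varphi)\circ B_{j,k}\circ(h\partial_{x_k}-\partial_{x_k}\varphi)+ \sum_{j=1}^n v_j(h\partial_{x_j}-\partial_{x_j}\varphi) + v_0 + hw_0.
\]
Evaluating at the constant function $1$, the relation $P(e^{-\varphi/h}) = 0$ becomes
\[
-\sum_{j,k=1}^n B_{j,k}\,\partial_{x_j}\varphi\,\partial_{x_k}\varphi - \sum_{j=1}^n v_j\,\partial_{x_j}\varphi + v_0 + h\,r(x;h) = 0,
\]
where the remainder $r(x;h) = \sum_{j,k}\partial_{x_j}(B_{j,k}\partial_{x_k}\varphi) + w_0$ is smooth and bounded as $h\to 0$. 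Since $B_{j,k}$, $v_j$, $v_0$ are independent of $h$ and $\varphi = \varphi_0 + {\cal O}(h)$ in $C^{\infty}$, passing to the limit $h\to 0$ and multiplying by $-1$ produces the first eikonal equation of the statement.

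For the equation governing $\psi_0$, I would run exactly the same argument with $P^*$ in place of $P$ and $\psi$ in place of $\varphi$. Computing the adjoint explicitly, one has
\[
P^* = -\sum_{j,k=1}^n h\partial_{x_j}\circ B_{j,k}\circ h\partial_{x_k} - \sum_{j=1}^n v_j\circ h\partial_{x_j} + v_0 + hw_0 - h\sum_{j=1}^n \partial_{x_j}v_j,
\]
so the analogous conjugation by $e^{\psi/h}$ and evaluation at $1$ yields
\[
-\sum_{j,k=1}^n B_{j,k}\,\partial_{x_j}\psi\,\partial_{x_k}\psi + \sum_{j=1}^n v_j\,\partial_{x_j}\psi + v_0 + h\,\widetilde r(x;h) = 0,
\]
and the $h\to 0$ limit gives the second eikonal equation.

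This argument is essentially a direct power-series computation and, in contrast with the existence part of Theorem \ref{supsym}, uses no cohomological input. The only point requiring attention is the passage to the limit $h\to 0$, which is justified by the $C^{\infty}$ asymptotic expansions of $\varphi$ and $\psi$ together with the $h$-independence of the coefficients $B_{j,k}$, $v_j$, $v_0$, $w_0$ assumed in the Corollary; this hypothesis is precisely what guarantees that the $h^0$ contributions are governed by $\varphi_0$ and $\psi_0$ alone and that lower-order corrections involving $w_0$ or higher-order terms in the expansions of $\varphi$ and $\psi$ are pushed into the $\mathcal{O}(h)$ remainder.
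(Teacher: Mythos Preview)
Your argument is correct and is exactly the natural derivation the paper has in mind; the corollary is stated there without proof, as an immediate consequence of the necessary condition \eqref{semicl1} in Theorem~\ref{supsym}~(ii), and your conjugation-and-leading-order computation is precisely how one extracts the eikonal equations from $P(e^{-\varphi/h})=0$ and $P^*(e^{-\psi/h})=0$.
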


\medskip
\noindent
{\it Remark}. Let us notice that if an operator $P$ possesses a supersymmetric structure in the semiclassical sense, then so does
$e^{f/h}\circ P \circ e^{-f/h}$, for a smooth $h$--independent function $f$. Indeed, if $P = d_{\psi,h}^{A,*} d_{\varphi,h}$, then
$$
e^{f/h}\circ P \circ e^{-f/h} = d_{\psi+f,h}^{A,*} d_{\varphi-f,h}.
$$

\bigskip
\noindent
We shall end this section by coming back to the three examples mentioned in the introduction, namely, the Witten Laplacian, the Kramers-Fokker-Planck operator,
and the model of chains of oscillators coupled to heat baths. All three of them come from stochastic differential equations, and we refer to Section 5 in
\cite{HHS08b} for a complete discussion concerning their derivation. Here we shall merely quote some results given there. The associated time-dependent equation,
\begeq
\label{eqf}
\left(h\partial_t + {P}\right)f(t,x)=0,\quad t\geq 0,
\endeq
describes the evolution of the particle density, and due to the stochastic origins of this model, the equation $P^* (1) = 0$ is always satisfied, so that for
$\psi=0$, the equation $P^*(e^{-\psi/h}) = 0$ is given for free. According to Theorem \ref{supsym}, the only remaining thing to check in order
to obtain the existence of a semiclassical supersymmetric structure for $P$ is then the existence of a smooth function $\varphi(x;h)=\varphi_0(x) + {\cal O}(h)$,
such that $P(e^{-2\varphi/h}) = 0$, for all $h>0$ small enough.

\paragraph{The Witten Laplacian.}
We begin with the semiclassical Witten case. It corresponds to an evolution equation with a gradient field $ -\gamma \nabla V(x)$ and a diffusion
force coming from a heat bath at a temperature $T=h/2$. The stochastic differential equation corresponding to this model is of the form,
$$
dx = - \gamma \D_x V dt + \sqrt{\gamma h} d w.
$$
Here $x\in \real^n$ is the spatial variable, the parameter $\gamma>0$ is a friction coefficient, and $w$ is an
$n$--dimensional Wiener process of mean $0$ and variance $1$.   Equation (\ref{eqf}) for the particle density in this case is then
\begin{equation}
\label{W1}
h\D_t f -\frac{\gamma}{2} h\D_x\cdot (h\D_x + 2\D_x V) f = 0.
\end{equation}
We see that $P^*(1) = 0$, and also that $P(e^{-2\varphi/h}) = 0$, for all $h>0$. Here we have written $\varphi=V$.
An application of Theorem \ref{supsym} gives that $P$ has a supersymmetric structure on $\real^n$ in the semiclassical sense.
The function $\mmm(x) = e^{-2\varphi(x)/h}$ is sometimes called the (non-normalized) Maxwellian of the system and corresponds to a stationary density, provided
that it is integrable on $\real^n$.

\medskip
\noindent
Writing $f = \mmm^{1/2} u$, we obtain from (\ref{W1}) that
\begin{equation} 
 h\D_t u + \frac{\gamma}{2} ( - h\D_x + \D_x V)\cdot (h\D_x + \D_x V) u = 0,
 \end{equation}
and  here we recognize the Witten operator $W = ( - h\D_x + \D_x V)\cdot (h\D_x + \D_x V)$ of the introduction. The supersymmetric structure of $W$
is given by
$$
A= \frac{\gamma}{2} {\rm I}_n, \ \ \ \varphi(x) = \psi(x) = V(x).
$$

\medskip
\paragraph{The Kramers-Fokker-Planck operator.}
The stochastic diffe\-ren\-tial equa\-ti\-on associated to this model is
\begin{equation}
\left\{ \begin{array}{l} dx = y dt \\ dy =  - \gamma y dt- \D_x V(x) dt
+ \sqrt{\gamma h} d w. \end{array} \right.
\end{equation}
The parameter $\gamma>0$ is a friction coefficient, and the particle of
position $x \in \real^n$ and velocity $y\in \real^n$ is submitted to an external force
field derived from a potential $V$, with $w$ being an $n$--dimensional
Wiener process of mean $0$ and variance $1$.
The corresponding equation for the particle density (\ref{eqf}) is then
\begin{equation}
\label{KFPn}
 h\D_t f -\frac{\gamma}{2} h\D_y\cdot ( h \D_y + 2y) f +
 y \cdot h\D_x f -  \D_x V \cdot h\D_y  f  = 0.
 \end{equation}
We have $P^*(1) = 0$, and posing $\varphi(x,y) = {y^2}/2 + V(x)$ we also see that $P(e^{-2\varphi/h}) = 0$. An application of Theorem \ref{supsym} gives
that $P$ is semiclassically supersymmetric on $\real^{2n}$. The (non-normalized) Maxwellian is given by $\mmm(x,y) =  e^{-2\varphi(x,y)/h}$.

\medskip
\noindent
If we write $f = \mmm^{1/2} u$, then (\ref{KFPn}) becomes,
\begin{equation} 
h\D_t u +\frac{\gamma}{2}(-h\D_y + y)\cdot ( h \D_y + y) u +
y \cdot h\D_x u -  \D_x V \cdot h\D_y  u  = 0
\end{equation}
Taking $\gamma=2$, we have thus arrived at the semiclassical Kramers-Fokker-Planck operator $K$, mentioned in the introduction,
$$
K = y\cdot h\D_x - \D_x V(x) \cdot h \D_y - h^2 \Delta_y + y^2 - hn.
$$
The supersymmetric structure of $K$ is given by
$$
 A =  \frac{1}{2} \left( \begin{array}{cc} 0 & {\rm I}_n \\ - {\rm I}_n & \gamma  \end{array}\right)
\ \ \ \ \textrm{ and } \ \ \ \  \varphi(x,y) = y^2/2 + V(x), \ \ \ \ \psi(x,y) = y^2/2 + V(x).
$$

\medskip
\paragraph{Chains of oscillators.}
This is a model for a system of particles  described by their respective position and velocity
$(x_j, y_j) \in \real^{2n}$ corresponding to two oscillators.
We suppose that for each oscillator $j \in \set{ 1,2}$, the particles
are  submitted to an external force derived from an effective potential
$W_j(x_j)$, and that there is a coupling between the two oscillators
derived from an effective potential $\delta W (x_1, x_2)$.  We denote by $W$ the sum
$$
W(x) = W_1(x_1)+ W_2(x_2) + \delta W (x_1, x_2),
$$
where $x = (x_1, x_2)$, and we also write $y = (y_1,y_2)$.
By $z_j$, $j \in \set{ 1,2}$ we shall denote the variables describing
the state of the particles in each of the heat baths, and set $z = (z_1,z_2)$. We suppose that the
particles in each bath are submitted to a coupling with the nearest
oscillator, a friction force  and a thermal diffusion at a temperature
$T_j = \alpha_j h /2 $, $j=1,2$. We denote by $w_j$, $j \in \set{ 1,2}$, two
$n$-dimensional Brownian motions of mean 0 and variance $1$, and set $w=(w_1, w_2)$.
The stochastic differential equation for this model is the following,
\begin{equation}
\left\{
\begin{array}{l}
 dx_1 = y_1 dt \\
  dy_1 =  - \D_{x_1} W(x) dt + (z_1-x_1) dt \\
  d z_1 = - \gamma z_1 dt + \gamma x_1 dt - \sqrt{\gamma \alpha_1 h } d
  w_1 \\
  d z_2 = - \gamma z_1 dt + \gamma x_2 dt - \sqrt{\gamma \alpha_2 h} d
  w_2 \\
dy_2 =  - \D_{x_2} W(x) dt + (z_2-x_2) dt \\
d x_2 =y_2 dt.
\end{array} \right.
\end{equation}
The parameter $\gamma>0$  is  the  friction coefficient in the baths.
The corresponding semiclassical equation (\ref{eqf}) for the particle density is then
\begin{equation} \label{firstW6} \begin{split}
 h\D_t f + \widetilde{P}_Wf := h\D_t f & +  \frac{\gamma}{2} \alpha_1 (-h\D_{z_1})( h \D_{z_1} + 2(z_1-x_1)/\alpha_1)f \\
 & + \frac{\gamma}{2} \alpha_2 (-h\D_{z_2})( h \D_{z_2} + 2(z_2-x_2)/\alpha_2)f \\
  &  + (y \cdot h\D_x f -  (\D_x W(x) + x-z) \cdot h\D_y) f  = 0.
   \end{split}
 \end{equation}
We have $\widetilde{P}_W^*(1) = 0$. In order to exhibit a semiclassical supersymmetric structure for $\widetilde{P}_W$, we need to find a smooth function
$\varphi(x;h) = \varphi_0(x)+{\cal O}(h)$, such that $\widetilde{P}_W(e^{-2\varphi/h}) = 0$, for all $h>0$ small enough. With a general configuration,
corresponding to different temperatures and non-trivial coupling $\delta W \neq 0$, the existence of a Maxwellian of the form $e^{-2\varphi/h}$, with
$\varphi$ as above is not clear, and in fact we shall show in the next section that in some cases, there is no such a Maxwellian, thus establishing part ii) of
Theorem \ref{nosupchains}. In the remainder of this section, we shall be concerned with two cases for the model (\ref{firstW6}), where a
supersymmetric structure can be found.

\medskip
\noindent
Let us set $W_0(x) = W_1(x_1) + W_2(x_2)$. The smooth function
\begeq
\label{eq2}
\varphi_0(x,y,z) = \sum_{j=1}^2 \frac{1}{\alpha_j} \left(\frac{y_j^2}{2} + W_j(x_j) + \frac{(x_j-z_j)^2}{2}\right).
\endeq
will be of central importance in the following discussion.

\bigskip
\noindent
\preuve[ of i) Theorem {\rm \ref{nosupchains}} in the case of equal temperatures]
Let us discuss first the case when $\alpha_1=\alpha_2=:\alpha$. In this case it is immediate to check that if we define
$$
\varphi = \varphi_0 + \frac{1}{\alpha}\delta W =
\sum_{j=1}^2 \frac{1}{\alpha_j} \left(\frac{y_j^2}{2} + W_j(x_j) + \frac{(x_j-z_j)^2}{2}\right) + \frac{1}{\alpha} \delta W(x),
$$
then we have  $\widetilde{P}_W(e^{-2\varphi/h}) = 0$. An application of Theorem \ref{supsym} then gives that the operator $\widetilde{P}_W$ is
supersymmetric on $\real^{6n}$, in the semiclassical sense. The associated Maxwellian is then defined up to a constant by
$$
\mmm(x,y,z) =  e^{-2\varphi/h}.
$$
Let us also give the conjugated version of the time-dependent problem. If we write
$$
f = \mmm^{1/2} u,
$$
the equation (\ref{firstW6}) becomes
\begin{equation} 
\begin{split}
 h\D_t u & +  \frac{\gamma}{2} \alpha \sep{-h\D_{z_1} +
 \frac{1}{\alpha}(z_1-x_1)}
             \sep{ h \D_{z_1} + \frac{1}{\alpha}(z_1-x_1)}u \\
 & +  \frac{\gamma}{2} \alpha \sep{-h\D_{z_2} + \frac{1}{\alpha}(z_2-x_2)}
             \sep{ h \D_{z_2} + \frac{1}{\alpha}(z_2-x_2)}u \\
  &  + \sep{y\cdot h \D_x  -  (\D_x W(x) +x-z)\cdot h \D_y} u  = 0.
   \end{split}
 \end{equation}
The conjugated operator,
\begin{multline}
\label{Hp}
{P}_W = \frac{\gamma}{2}\sum_{j=1}^2 \alpha\left(-h\partial _{z_j}+\frac{1}{\alpha}(z_j-x_j)\right)\left(h\partial
_{z_j}+\frac{1}{\alpha}(z_j-x_j)\right) \\
 + y\cdot h\partial_x -(\partial_x W(x)+x-z)\cdot h\partial _y
\end{multline}
then enjoys a supersymmetric structure of the form ${P}_W = d_{\varphi,h}^{A,*} d_{\varphi,h}$, with
\begin{equation}
\label{phibb}
\varphi =   \varphi_0 + \frac{1}{\alpha} \delta W = \frac{1}{\alpha} \sep{W(x) + y^2/2 + (z-x)^2/2 },
\end{equation}
and a non-degenerate (constant) matrix $A$   given by
\begin{equation} \label{same}
A = \frac{\alpha}{2} \left( \begin{array}{ccc}
0 & {\mathrm I}_n & 0 \\
-{\mathrm I}_n & 0 & 0 \\
0 & 0 & \gamma {\mathrm I}_n
\end{array}\right) .
\end{equation}

\bigskip
\preuve[ of i) Theorem {\rm \ref{nosupchains}} in the case when $\delta W \equiv 0$]
We consider now the case when $\delta W\equiv 0$, so that $W=W_0$, while $\alpha_1 $ and $\alpha_2$ may be unequal. In this case it is immediate to check that with
$\varphi_0$ defined in (\ref{eq2}), we have  $\widetilde{P}_{W_0}(e^{-2\varphi_0/h}) = 0$. Following Theorem \ref{supsym}, we conclude therefore that
the operator $\widetilde{P}_{W_0}$ is supersymmetric, in the semiclassical sense. The Maxwellian associated to the problem is then defined up to a constant by
$$
\mmm(x,y,z) =  e^{-2\varphi_0/h}.
$$
If we write as before,
$$
f = \mmm^{1/2} u,
$$
then the equation (\ref{firstW6}) becomes
\begin{equation} \label{firstW7} \begin{split}
 h\D_t u & +  \frac{\gamma}{2} \alpha_1 \sep{-h\D_{z_1} +
 \frac{1}{\alpha_1}(z_1-x_1)}
             \sep{ h \D_{z_1} + \frac{1}{\alpha_1}(z_1-x_1)}u \\
 & +  \frac{\gamma}{2} \alpha_2 \sep{-h\D_{z_2} + \frac{1}{\alpha_2}(z_2-x_2)}
             \sep{ h \D_{z_2} + \frac{1}{\alpha_2}(z_2-x_2)}u \\
  &  + \sep{y\cdot h \D_x  -  (\D_x W_0+ x  -z) \cdot h \D_y} u  = 0.
   \end{split}
 \end{equation}
We then have a supersymmetric structure for the conjugated operator, occurring in (\ref{firstW7}), with
\begin{equation} \label{phibbb}
\varphi = \psi =  \varphi_0  = \sum_{j=1}^2 \frac{1}{\alpha_j} \sep{W_j(x) + y_j^2/2 + (z_j-x_j)^2/2 },
\end{equation}
and a non-degenerate (constant) matrix $A$   given by
$$
A = \frac{1}{2} \left( \begin{array}{cccccc}
0&0  & \alpha_1&0 & 0&0 \\
0&0  & 0&\alpha_2  & 0&0 \\
-\alpha_1&0 & 0&0 & 0&0 \\
0&-\alpha_2 & 0&0 & 0&0 \\
0&0 & 0&0 & \gamma \alpha_1&0 \\
0&0 & 0&0 & 0&\gamma \alpha_2 \\
\end{array}\right) .
$$

\section{Non-existence of super\-symmetric struc\-tures for chains of os\-cil\-lators}
\setcounter{equation}{0}
The purpose of this section is to discuss the question of non-existence of smooth supersymmetric structures for the model of a chain of
anharmonic oscillators, in the case when the temperatures of the baths are unequal and the effective interaction potential $\delta W$ is non-vanishing.
This will establish part ii) of Theorem \ref{nosupchains}.

\medskip
\noindent
Following (\ref{firstW6}), we are interested in the equation,
\begeq
\label{eq0}
\left(h\partial_t + \widetilde{P}_W\right)f(t,x,y,z)=0,\quad t\geq 0,
\endeq
where
\begin{equation}
\label{firstW6bis}
\widetilde{P}_W = \frac{\gamma}{2}\sum_{j=1}^2 \alpha_j(-h\partial _{z_j})\left(h\partial_{z_j}+\frac{2}{\alpha _j}(z_j-x_j)\right) + y\cdot h\partial_x -
(\partial_x W(x)+x-z)\cdot h\partial _y.
\end{equation}
Here the equation $\widetilde{P}_W^*(1) = 0$ is satisfied automatically. In order to establish the breakdown of the semiclassical supersymmetry for
$\widetilde{P}_W$, we only have to show that there does not exist a function $\varphi\in C^{\infty}(\real^{6n};\real)$ with
$\varphi(x;h) = \varphi_0(x) + {\cal O}(h)$ in the $C^{\infty}$--sense, such that $\widetilde{P}_W(e^{-2\varphi/h}) = 0$ for all $h>0$ small enough,
for a suitable choice of the effective potentials $W_1$, $W_2$, and $\delta W$.

\bigskip
\noindent
In what follows, when $W=W_0$, we shall find it convenient to work with the conjugated operator
\begeq
\label{eq3}
P_0=P_{W_0} := e^{\varphi_0/h} \circ \widetilde{P}_{W_0} \circ e^{-\varphi_0/h},
\endeq
where we recall that the function $\varphi_0$ is given by
$$
\varphi_0  = \sum_{j=1}^2 \frac{1}{\alpha_j} \left(\frac{y_j^2}{2} + W_j(x_j) + \frac{(x_j-z_j)^2}{2}\right)
$$
Following (\ref{firstW7}), we obtain that
\begin{multline}
\label{eq4}
P_{0} = \frac{\gamma}{2}\sum_{j=1}^2 \alpha_j\left(-h\partial _{z_j}+ \frac{1}{\alpha_j}(z_j-x_j)\right)
\left(h\partial_{z_j}+\frac{1}{\alpha _j}(z_j-x_j)\right) \\
+ y\cdot h\partial_x - (\partial_x W_0(x)+x-z)\cdot h\partial _y,
\end{multline}
with the function $e^{-\varphi_0/h}$ being in the kernel of $P_{0}$ and its adjoint.

\bigskip
\noindent
The leading part $p_{0}=p_{0}(x,y,z,\xi,\eta, \zeta)$ of the semiclassical symbol of $P_{0}=P_{0}(x,y,z,hD_x,hD_y, hD_z;h)$ is given by
\begeq
\label{eq5}
p_{0} = \frac{\gamma}{2} \sum_{j=1}^2 \alpha_j \left(\zeta_j^2 + \frac{1}{\alpha_j^2} (x_j-z_j)^2\right) + iy\cdot \xi - i(\partial_x W_0(x) + x -z)\cdot \eta.
\endeq
We may notice here that $\mathrm{Re}\, p_{0} \geq 0$. Associated to the operator $P_{0}$ in (\ref{eq4}) is the real symbol
$$
q_{0}(x,y,z,\xi,\eta,\zeta) = -p_0(x,y,z,i\xi,i\eta,i\zeta),
$$
with
$$
P_0(x,y,z, hD_x,hD_y,hD_z) = -q_0(x,y,z,-h\partial_x, -h\partial_y, -h\partial_z)
$$
to leading order. We have
\begeq
\label{eq6}
q_0 = \frac{\gamma}{2} \sum_{j=1}^2 \alpha_j \left(\zeta_j^2 - \frac{1}{\alpha_j^2} (x_j-z_j)^2\right) + y\cdot \xi - (\partial_x W_0(x) + x-z)\cdot \eta.
\endeq
The phase function $\varphi_0\in C^{\infty}(\real^{6n})$ satisfies the eikonal equation
\begeq
\label{eq7}
q_0(x,y,z,\partial_x \varphi_0, \partial_y \varphi_0, \partial_z \varphi_0) = 0,
\endeq
reflecting the fact that $P_0(x,y,z,-ih\partial_x,-ih\partial_y,-ih\partial_z;h) \left(e^{-\varphi_0/h}\right)=0$.

\bigskip
\noindent
Let $x_0\in \real^{2n}$ be a non-degenerate critical point of $W_0(x)$, and let $\rho \in T^*\real^{6n}$ be the corresponding point in the phase space, given by
$x=x_0$, $y=0$, $z=x_0$, $\xi=\eta = \zeta =0$. Let furthermore $F_{p_0} = F_{p_0,\rho}$ be the fundamental matrix of the quadratic approximation of $p_0$
at the doubly characteristic point $\rho$. Then, as explained in~\cite{HHS08a},~\cite{HHS08b}, the spectrum of $F_{p_0}$ avoids the real axis and is of the form
$\pm \lambda_k$, $1\leq k \leq 6n$, $\textrm{Im}\, \lambda_k > 0$.

\medskip
\noindent
{\it Remark.} The preceding result comes from the fact that near $\rho$, the average of $\textrm{Re}\, p_0$ along the Hamilton flow of
$\textrm{Im}\, p_0$ has a positive definite quadratic part in its Taylor expansion at $\rho$, see \cite{HHS08a}, \cite{HP09}.
Following \cite{HP09} and \cite{OtPaPr}, this can also be interpreted by saying that the so-called singular space $S$, defined by
$$
S =\Big(\bigcap_{j=0}^{\infty}\textrm{Ker}\big[\textrm{Re }F_{p_0}(\textrm{Im }F_{p_0})^j \big]\Big) \bigcap T^*\real^{6n},
$$
is trivial in this case.

\bigskip
\noindent
Let next $\rho \in T^*\real^{6n}$ and $F_{p_0}$ be as above, and let $F_{q_0}$ be the fundamental matrix of the quadratic approximation of $q_0$ at the point
$\rho$. Following e.g. \cite{HHS08a}, we see that the eigenvalues of $F_{q_0}$ and $i^{-1} F_{p_0}$ are the same and are of the form $\pm \frac{1}{i} \lambda_k$, $1\leq k \leq 6n$, where
$$
\textrm{Re}\, \left(\frac{1}{i} \lambda_k\right) > 0, \quad 1\leq k \leq 6n.
$$
As explained in~\cite{HHS08a}, an application of the stable manifold theorem allows us to conclude that in a suitable neighborhood of the point
$(x_0,0,x_0) = \pi(\rho)$, the eikonal equation
\begeq
\label{eq8}
q_0(x,y,z,\partial_x \varphi, \partial_y \varphi, \partial_z \varphi)=0,
\endeq
has a unique smooth solution $\varphi(x,y,z)$ such that $\varphi(\pi(\rho)) = \varphi'(\pi(\rho))=0$ and $\varphi''(\pi(\rho))$ is positive definite. Here
$\pi((x,y,z,\xi,\eta,\zeta)) = (x,y,z)$. Indeed, the function $\varphi(x,y,z)$ is obtained as the generating function for the unstable manifold through
$\rho$ for the $H_{q_0}$--flow. Applying this discussion to the case when $x_0$ is a non-degenerate local minimum of $W_0(x)$, we see that necessarily, we have
the equality $\varphi = \varphi_0$ in a neighborhood of $\pi(\rho)$.

\bigskip
\noindent
We shall assume throughout this section that $W_0(x) = W_1(x_1) + W_2(x_2)$, where $W_2$ is a positive definite quadratic form on
$\real^n$. As for $W_1$, we assume that $W_1$ is a Morse function on $\real^n$ such that $\partial^{\alpha}_{x_1} W_1 = {\cal O}(1)$ for all
$\alpha\in \nat^n$ with $\abs{\alpha}\geq 2$, and such that $\abs{\nabla W_1(x_1)} \geq 1/C$ for $\abs{x_1} \geq C>0$. Assume furthermore that $W_1$ is a double
well potential, so that it has precisely three critical points: two local minima $m_{\pm}$ and a saddle point $s_0$ of signature $(n-1,1)$. The critical points of
$W_0$ are then the local minima $M_{\pm} = (m_{\pm},0)$ and the saddle point $S_0 = (s_0,0)$, of signature $(2n-1,1)$. Furthermore, $W_0(x)\rightarrow +\infty$ as
$\abs{x} \rightarrow \infty$. From (\ref{eq2}) we see that the restriction of $\varphi_0$ to the subspace
$$
L=\{ (x,y,z)\in {\bf R}^{6n};\ z=x,\ y=0\}
$$
can be identified with $W_0$ and that the Hessian of $\varphi_0$ in the directions orthogonal to $L$ is positive definite. Consequently,
$\varphi_0$ is also a Morse function on $\real^{6n}$, tending to $+\infty $ when $(x,y,z)\to \infty $, with precisely three critical points, given by
the local minima $\widetilde{M}_{\pm} = (M_{\pm},0,M_{\pm})$ and the saddle point $\widetilde{S}_0 = (S_0,0,S_0)$, of signature $(6n-1,1)$.

\bigskip
\noindent
Let $\delta W(x)\in C^{\infty}(\real^{2n})$, and thinking about $\delta W(x)$ as a perturbation, let us introduce the perturbed effective potential
$W = W_0 + \delta W$. The corresponding operator $\widetilde{P}_W$ in (\ref{firstW6bis}) is then of the form,
\begeq
\label{eq8.0}
\widetilde{P}_W = \widetilde{P}_{W_0} - \partial_x\left(\delta W(x)\right)\cdot h\partial_y,
\endeq
and after a conjugation, we obtain the operator
\begin{multline}
\label{conj}
P_W := e^{\varphi_0/h} \circ \widetilde{P}_{W} \circ e^{-\varphi_0/h} = P_0 -
e^{\varphi_0/h} \circ \partial_x\left(\delta W(x)\right)\cdot h\partial_y \circ e^{-\varphi_0/h} \\
= P_0 - \partial_x\left(\delta W(x)\right)\cdot (h\partial_y - \partial_y \varphi_0).
\end{multline}
Associated to the operator (\ref{conj}), we have the perturbed real Hamiltonian,
\begeq
\label{eq8.1}
q(x,y,z,\xi,\eta,\zeta) = q_0(x,y,z,\xi,\eta,\zeta) - \partial_x \left(\delta W(x)\right)\cdot \left(\eta+\partial_y \varphi_0\right).
\endeq

\bigskip
\noindent
We are interested in the question whether the perturbed conjugated operator $P_W$ still possesses a smooth supersymmetric structure on $\real^{6n}$,
in the semiclassical sense. According to Corollary 2.5, a necessary condition for that is the existence of a smooth solution $\varphi$ of the eikonal equation,
\begeq
\label{eq8.2}
q(x,y,z,\partial_x\varphi,\partial_y\varphi, \partial_z\varphi) = 0.
\endeq

\medskip
\noindent
The following is the main result of this section.

\begin{theo}
\label{thm_main}
With a suitable choice of $W_1$ and $W_2$ as above, there exists $\delta W\in C^{\infty}(\real^{2n})$ with $M_+$ and
$S_0 \not\in \textrm{supp} (\delta W)$ such that the eikonal equation {\rm (\ref{eq8.2})}
does not have any solution $\varphi\in C^{\infty}(\Omega)$, with $\varphi(\widetilde{M}_+)=\varphi'(\widetilde{M}_+)=0$, $\varphi''(\widetilde{M}_+)>0$,
for any open set $\Omega\subset \real^{6n}$ such that $\Omega_+ \Subset \Omega$. Here $\Omega_+$ is the connected component of the set
$\varphi_0^{-1} ((-\infty, \varphi_0(\widetilde{S}_0)))$, which contains $\widetilde{M}_+$.
\end{theo}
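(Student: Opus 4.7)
\textbf{Proof strategy for Theorem \ref{thm_main}.} Suppose, for contradiction, that there is a smooth solution $\varphi\in C^\infty(\Omega)$ of the eikonal equation (\ref{eq8.2}) normalised by $\varphi(\widetilde{M}_+)=\varphi'(\widetilde{M}_+)=0$ and $\varphi''(\widetilde{M}_+)>0$, on some open $\Omega\Supset \Omega_+$. The set
$$
\Lambda_\varphi=\{(x,y,z,d\varphi(x,y,z)):(x,y,z)\in\Omega\}\subset T^*\R^{6n}
$$
is then a projectable Lagrangian submanifold contained in $q^{-1}(0)$, and therefore invariant under the Hamilton flow $\exp(tH_q)$. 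Denote by $\rho_+$ and $\rho_s$ the doubly characteristic points of $H_{q_0}$ lying over $\widetilde{M}_+$ and $\widetilde{S}_0$; since $\supp(\delta W)$ avoids $m_+$ and $s_0$, these are also doubly characteristic for $H_q$. The stable manifold argument recalled after (\ref{eq7})--(\ref{eq8}), applied at $\rho_+$, forces $\Lambda_\varphi$ to coincide near $\rho_+$ with the unique outgoing (unstable) Lagrangian $\Lambda_+^u$ of $H_q$; propagating by the flow, $\Lambda_\varphi$ is a full open piece of $\Lambda_+^u$.

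Near $\rho_s$ the perturbation also vanishes, so $H_q=H_{q_0}$ there. The same stable manifold argument applied to the local outgoing Lagrangian of $H_{q_0}$ at $\rho_s$ identifies it, in the unperturbed problem, with the germ of $\Lambda_0:=\{(x,y,z,d\varphi_0)\}$ at $\rho_s$. Hence, for the global $\varphi$ to exist on $\Omega$, the flowed-out manifold $\Lambda_+^u$ must agree with $\Lambda_0$ in a full neighbourhood of $\rho_s$. When $\delta W\equiv 0$ this is automatic, since $\Lambda_0$ is a smooth $H_{q_0}$-invariant Lagrangian containing both $\rho_+$ and $\rho_s$; moreover the Morse-type gradient flow of $\varphi_0$ on $\Lambda_0$, restricted to $\Omega_+$, supplies a heteroclinic orbit $\rho_0(t)$ of $H_{q_0}$ with $\rho_0(-\infty)=\rho_+$, $\rho_0(+\infty)=\rho_s$.

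A first-variation (Melnikov-type) computation along $\rho_0(t)$ under the perturbation
$$
\delta q=q-q_0=-\partial_x(\delta W)(x)\cdot(\eta+\partial_y\varphi_0),\qquad \partial_y\varphi_0=\bigl(y_1/\alpha_1,\,y_2/\alpha_2\bigr),
$$
then yields, after transport to $\rho_s$ and projection onto a complement of $T_{\rho_s}\Lambda_0$ inside $q_0^{-1}(0)$, a linear obstruction functional $I:C_0^\infty(\R^{2n})\to\R^N$ measuring the transverse displacement of $\Lambda_+^u$ from $\Lambda_0$ at $\rho_s$; schematically,
$$
I(\delta W)=\int_{-\infty}^{+\infty}\bigl\{\delta q,\,\ell_k\bigr\}\bigl(\rho_0(t)\bigr)\,dt,\qquad 1\le k\le N,
$$
where $\ell_1,\dots,\ell_N$ are fixed linear forms spanning this complement. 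The main obstacle, and the genuine content of Theorem \ref{thm_main}, is to show that $I$ is nontrivial once $\alpha_1\ne\alpha_2$: the equal-temperature cancellation responsible for the existence of $\varphi_0$ in case i) of Theorem \ref{nosupchains} must fail at first order. The mechanism is the unequal $\alpha_j^{-1}$-weighting of $\partial_y\varphi_0$, which makes the integrand in $I$ couple the two oscillators with weights that cannot be re-absorbed into an exact one-form along $\rho_0(t)$.

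Once nontriviality of $I$ is in hand, the theorem follows routinely. I would take $W_1$ an explicit Morse double-well on $\R^n$ for which the projected heteroclinic $\pi(\rho_0(t))$ between $m_+$ and $s_0$ can be described concretely (for instance, the one-dimensional instanton joining the two wells through $s_0$), verify by direct computation that $I$ vanishes identically for $\alpha_1=\alpha_2$ but defines a nontrivial continuous functional on $C_0^\infty(\R^{2n}\setminus\{m_+,s_0\})$ for $\alpha_1\ne\alpha_2$, and finally localise $\delta W$ in an arbitrarily thin tube around $\pi(\rho_0(t))$ away from $m_+$ and $s_0$, scaled so that $\|\delta W\|_\infty$ is arbitrarily small while $I(\delta W)\ne 0$. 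The resulting perturbation forces $\Lambda_+^u\ne\Lambda_0$ near $\rho_s$ in a direction transverse to the local outgoing Lagrangian of $H_{q_0}$, contradicting the compatibility from the second paragraph. Hence no such $\varphi$ can exist on any open $\Omega\Supset\Omega_+$, which is the assertion of Theorem \ref{thm_main}.
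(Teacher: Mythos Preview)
Your proposal follows a different route from the paper, and it contains a genuine gap at the step that you yourself flag as ``the genuine content'' of the theorem. The argument rests on the claim that, for a smooth $\varphi$ to exist on $\Omega$, the flowed-out Lagrangian $\Lambda_+^u$ must coincide with $\Lambda_0=\{d\varphi_0\}$ in a full neighbourhood of $\rho_s$. This is not justified. The uniqueness statement recalled after (\ref{eq8}) applies at $\rho_+$ because the normalisation $\varphi''(\widetilde{M}_+)>0$ is prescribed there; no such condition is imposed at $\widetilde{S}_0$, which is a saddle of $\varphi_0$ of signature $(6n-1,1)$. Near $\rho_s$ there is no reason for $\Lambda_\varphi$ to be the particular $H_{q_0}$--invariant Lagrangian $\Lambda_0$; a priori it could be the graph of some other smooth function. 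Hence even if your Melnikov functional $I$ detects that $\Lambda_+^u\neq\Lambda_0$ to first order near $\rho_s$, this does not yield a contradiction: what must be shown is that $\Lambda_+^u$ fails to be a smooth graph at all near $\widetilde{S}_0$, not merely that it differs from $\Lambda_0$. The remainder of the proposal (the explicit choice of $W_1$, the computation of $I$, the verification that it is nontrivial precisely when $\alpha_1\neq\alpha_2$) is left at the level of intentions.

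The paper argues differently and more directly. Writing $\varphi=\varphi_0+\psi$ reduces (\ref{eq8.2}) to a nonlinear transport equation for $\psi$ driven by the vector field $\nu$ of (\ref{eq9.5}), which is $H_{q_0}$ projected to $\Lambda_{\varphi_0}$. One first analyses the $\nu$--flow: Proposition \ref{prop1} shows $\varphi_0$ increases strictly along it away from critical points, yielding a heteroclinic $\nu$--orbit from $\widetilde{M}_+$ to $\widetilde{S}_0$. One then takes $\delta W(x_1,x_2)$ homogeneous of degree $m\ge 3$ in $x_2$; Proposition \ref{prop2} and an induction on the Taylor expansion of $\psi$ in $w_2=(x_2,y_2,z_2)$ force $\psi_\mu=0$ in $\Omega_+$ for $\mu<m$, so that the leading term $\psi_m$ must solve the \emph{linear} transport equation $\nu\psi_m=2\,\partial_x\delta W\cdot\partial_y\varphi_0$. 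After diagonalising the $w_2$--part of $\nu$ this decouples into scalar equations $(\nu_1+\lambda\cdot\alpha)u_\alpha=g_\alpha$ along the heteroclinic in the $w_1$--variables; solving these explicitly shows that $u_\alpha$ either blows up near $(m_+,0,m_+)$ or behaves like $C\,x_1^{\lambda\cdot\alpha/\mu_1}$ near $(s_0,0,s_0)$, with $C\neq 0$. Choosing the parameters so that $\lambda\cdot\alpha/\mu_1\notin{\bf N}$ then rules out smoothness at $\widetilde{S}_0$. The obstruction is thus a regularity failure governed by a spectral non-resonance condition, not a transversal splitting of two given Lagrangians.
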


\begin{cor} The perturbed operator $\widetilde{P}_W$ in {\rm (\ref{eq8.0})} does not possess any smooth supersymmetric structure
in the open set $\Omega$, in the semiclassical sense.
\end{cor}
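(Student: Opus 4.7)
I argue by contradiction. Assume such a smooth $\varphi\in C^\infty(\Omega)$ exists; its graph $\Lambda_\varphi=\{(x,y,z,d\varphi(x,y,z))\}$ is then an $H_q$-invariant Lagrangian in $T^*\Omega$. Since $M_+\notin\supp(\delta W)$, the symbol $q$ agrees with $q_0$ in a neighborhood of $\widetilde M_+$, and by the stable-manifold uniqueness recalled in the paragraph containing (\ref{eq8}), the only smooth solution of $q_0(\cdot,d\varphi)=0$ with positive-definite Hessian at $\widetilde M_+$ is $\varphi=\varphi_0$. Hence $\varphi=\varphi_0$ near $\widetilde M_+$, and $\Lambda_\varphi$ coincides there with the unstable Lagrangian $\Lambda_{\varphi_0}$ of $(\widetilde M_+,0)$ for $H_{q_0}$.

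\textbf{Linearized transport and an obstruction one-form.} Writing $\varphi=\varphi_0+\delta\varphi$ and linearizing in $\delta W$, the eikonal (\ref{eq8.2}) reduces, using $q=q_0-\partial_x(\delta W)\cdot(\eta+\partial_y\varphi_0)$ and $q_0(\cdot,d\varphi_0)=0$, to the transport equation
\begeq
\label{plan_transp}
X(\delta\varphi)\;=\;2\sum_{j=1}^{2}\frac{y_j}{\alpha_j}\,\partial_{x_j}(\delta W),
\endeq
where $X=y\cdot\partial_x-(\partial_xW_0+x-z)\cdot\partial_y+\gamma(z-x)\cdot\partial_z$ is the base projection of $H_{q_0}|_{\Lambda_{\varphi_0}}$. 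The points $\widetilde M_\pm$ and $\widetilde S_0$ are equilibria of $X$, and $\varphi_0$ is a weak Lyapunov function: $X(\varphi_0)=\gamma\sum_{j}\alpha_j^{-1}(z_j-x_j)^2\ge 0$. The previous step forces $\delta\varphi\equiv 0$ near $\widetilde M_+$. Along an $X$-orbit one has $y_j\,dt=dx_j$, so the right-hand side of (\ref{plan_transp}) integrates to the line integral over the $x$-projection of the orbit of the $1$-form
\[
\omega\;=\;2\sum_{j=1}^{2}\alpha_j^{-1}\,\partial_{x_j}(\delta W)\,dx_j,\qquad d\omega\;=\;2(\alpha_1^{-1}-\alpha_2^{-1})\,\partial_{x_1}\partial_{x_2}(\delta W)\,dx_1\wedge dx_2,
\]
which is \emph{not closed} as soon as $\alpha_1\neq\alpha_2$ and $\partial_{x_1}\partial_{x_2}(\delta W)\not\equiv 0$.

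\textbf{Producing the contradiction.} For $\delta\varphi$ to extend smoothly on a full neighborhood of $\widetilde S_0\subset\Omega$, the increments of $\delta\varphi$ obtained by integrating (\ref{plan_transp}) along $X$-trajectories in $\Omega_+$ approaching the saddle must be path-compatible. Choose $W_1\in C^\infty(\real^n)$ a Morse double-well with minima $m_\pm$ and saddle $s_0$, and $W_2$ a positive definite quadratic form on $\real^n$; a direct computation (Routh--Hurwitz applied to the characteristic cubic of the linearization of $X$) shows that $\widetilde M_+$ is a source for $X$ while $\widetilde S_0$ is a hyperbolic equilibrium whose stable and unstable manifolds for $X$ intersect $W^u(\widetilde M_+;X)$ non-trivially. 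The quadratic choice of $W_2$ makes the $(x_2,y_2,z_2)$-block of the linearization of $X$ purely hyperbolic, so the essential dynamics is confined to the $(x_1,y_1,z_1)$-block, and the $X$-trajectories entering a small neighborhood of $\widetilde S_0$ from $\Omega_+$ sweep out, after $x$-projection, a small open region of $\real^{2n}_x$. Now pick $\delta W\in C_0^\infty(\real^{2n}\setminus\{m_\pm,s_0\})$, of arbitrarily small uniform norm, with $\partial_{x_1}\partial_{x_2}(\delta W)$ of definite sign on a small ball crossed by a one-parameter family of such $x$-projections. Two members of this family then bound a $2$-cell containing the ball; the corresponding integrated increments of $\delta\varphi$ differ by the flux $\iint d\omega$, which is nonzero and of size proportional to $(\alpha_1^{-1}-\alpha_2^{-1})$. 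This contradicts the single-valuedness of any smooth $\delta\varphi$ near $\widetilde S_0$. Finally, for $\|\delta W\|_\infty$ sufficiently small, the $O(\delta W^2)$ corrections to (\ref{plan_transp}) are absorbed by a standard contraction argument on a Fréchet space of smooth functions, transferring the linear obstruction to the full nonlinear eikonal.

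\textbf{Main obstacle.} The heart of the argument is the construction above: exhibiting an explicit pair of $X$-trajectories in $\Omega_+$ whose $x$-projections enclose a region with nonzero $d\omega$-flux, while keeping $\supp(\delta W)$ disjoint from both $M_+$ and $S_0$. The difficulty is that $X$ degenerates on $\{x=z\}$ to the conservative Hamiltonian oscillator flow of $y^2/2+W_0(x)$, where $\varphi_0$ is constant and the global trajectory structure is delicate. The choice of $W_2$ as a strictly convex quadratic form is engineered precisely to split off a linear hyperbolic contracting block, reducing the geometric obstruction problem to the two-dimensional double-well dynamics of $W_1$, where the required family can be produced by standard means.
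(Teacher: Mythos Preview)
Your reduction to the transport equation $X(\delta\varphi)=2\sum_j\alpha_j^{-1}y_j\,\partial_{x_j}(\delta W)$ is the right starting point (this is the paper's (\ref{eq15}), reached there after a Taylor-expansion argument in $w_2=(x_2,y_2,z_2)$ shows the quadratic terms in (\ref{eq9}) are of strictly higher order). But the Stokes-flux obstruction does not go through. Distinct $X$-trajectories in $\Omega_+$ do not share endpoints in $(x,y,z)$-space, so the two line integrals $\int\omega$ along two $x$-projections compute $\delta\varphi$ at four \emph{different} points of $\real^{6n}$; a nonzero difference between them is no contradiction to single-valuedness. In fact the stable manifold of $\widetilde S_0$ for $X$ is one-dimensional, hence there is a \emph{unique} heteroclinic orbit from $\widetilde M_+$ to $\widetilde S_0$, and nearby trajectories exit $\overline{\Omega_+}$ rather than accumulate on $\widetilde S_0$ --- the pair of trajectories your 2-cell argument requires does not exist. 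Your ``standard contraction argument'' for the nonlinear remainder is also unavailable near the saddle, where the linearized transport operator has no bounded inverse on smooth functions.

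The actual obstruction is a \emph{regularity} obstruction at $\widetilde S_0$, not a multivaluedness one. The paper takes $\delta W(x_1,x_2)=\pm\psi(x_1)v(x_2)$ with $v$ a homogeneous polynomial of degree $m\ge 3$ in $x_2$; a Taylor analysis forces $\psi_\mu=0$ for $\mu<m$, so the quadratic terms in (\ref{eq9}) contribute only at order $\ge 2m-2>m$ and the equation for $\psi_m$ is exactly linear. Diagonalizing the linear field $\nu_2$ then decouples (\ref{eq15}) into scalar ODEs $(\nu_1+\lambda\cdot\alpha)u_\alpha=g_\alpha(x_1)$ along the unique $\nu_1$-heteroclinic $\gamma_1$ from $(m_+,0,m_+)$ to $(s_0,0,s_0)$. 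With $g_\alpha\ge 0$ compactly supported and not identically zero on $\gamma_1$, the solution is either unbounded near the source (growing like $e^{-(\lambda\cdot\alpha)t}$) or behaves like $C\,x_1^{(\lambda\cdot\alpha)/\mu_1}$ with $C\ne 0$ near the saddle; choosing parameters so that $(\lambda\cdot\alpha)/\mu_1\notin{\bf N}$ rules out smoothness there.
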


\medskip
\noindent
{\it Remark}. The idea behind the result of Theorem \ref{thm_main} is that the solvability of the eikonal equation (\ref{eq8.2}) does not seem to be an issue
near the local minimum $\widetilde{M}_+$, but if we start by solving the problem near this point and try to extend the solution, we may run into some
trouble when approaching the saddle point $\widetilde{S}_0$.

\bigskip
\noindent
When establishing Theorem \ref{thm_main}, it will be convenient to write $\varphi = \varphi_0 + \psi$, and to consider the corresponding eikonal
equation for $\psi$. A direct computation using (\ref{eq2}), (\ref{eq6}), and (\ref{eq8.1}) shows that $\psi\in C^{\infty}(\Omega)$ should satisfy the
following equation,
\begeq
\label{eq9}
\nu \psi + \frac{\gamma}{2} \sum_{j=1}^2 \alpha_j (\partial_{z_j} \psi)^2 - \partial_x \delta W \cdot \partial_y \psi =
2 \partial_x \delta W \cdot \partial_y \varphi_0.
\endeq
Here the real vector field $\nu$ is given by
\begeq
\label{eq9.5}
\nu = \gamma(z-x)\cdot \partial_z + y\cdot\partial_x - (\partial_x W_0(x) + x-z)\cdot \partial_y,
\endeq
and we notice that $\nu$ can be identified with the Hamilton vector field $H_{q_0}$ along $\Lambda_{\varphi_0}$,
via the projection $\pi|_{\Lambda_{\varphi_0}}$. Here the Lagrangian manifold $\Lambda_{\varphi_0}\subset T^*\real^{6n}$ is given by
$(\xi,\eta,\zeta) = (\partial_x \varphi_0, \partial_y \varphi_0, \partial_z \varphi_0)$.

\bigskip
\noindent
We shall prepare for the analysis of the eikonal equation (\ref{eq9}) by studying the behavior of the integral curves of $\nu$, in particular close
to the stationary points of $\nu$. Here we may also notice that the flow of $\nu$ is complete.

\medskip
\noindent
Without using the assumption that $W_0$ is a direct sum, we see that the vector field $\nu$ vanishes at a point $(x,y,z)$ precisely when
\begeq
\label{eq9.6}
y=0,\ z=x,\ \partial _x W_0(x)=0.
\endeq
Recall that for simplicity we assumed  $\gamma =1$, and let $x_0$ be a critical point of $W_0$. Then at the corresponding point
$(x_0,y_0,z_0)=(x_0,0,x_0)$, the linearization of $\nu$ has the block matrix
\begeq
\label{eq9.7}
N=\begin{pmatrix}0 &1 &0\\ -W''(x_0)-1 &0 &1\\ -1 &0 &1\end{pmatrix}.
\endeq
Let $\lambda$ be an eigenvalue of $N$ and let $(x\ y\ z)^t$ be a corresponding eigenvector. Then we get the system
$$
\begin{cases}-\lambda x+y &=0,\\
(W''(x_0)+1)x+\lambda y-z&=0,\\
-x+(1-\lambda )z&=0.
\end{cases}
$$
Noticing that $\lambda =1$ is not an eigenvalue, we obtain that
$$
y=\lambda x,\ z=(1-\lambda )^{-1}x,
$$
where $x$ satisfies
$$
W''(x_0) x = \left(\frac{\lambda }{1-\lambda }-\lambda ^2 \right)x.
$$
Thus, each eigenvalue $w$ of $W''(x_0)$ gives rise to three eigenvalues $\lambda $ of $N$, given by the equation
\begeq
\label{eq9.8}
F(\lambda):=\frac{\lambda}{1-\lambda}-\lambda ^2=w.
\endeq
We have
\begeq
\label{eq9.9}
F'(\lambda) = \frac{G(\lambda)}{(1-\lambda)^2}, \quad \lambda \neq 1,
\endeq
where $G(\lambda) = 1 - 2\lambda(1-\lambda)^2$. It is easy to see that $G(\lambda)>0$ for $\lambda < 1$, and therefore,
$F(\lambda)$ is strictly increasing from $-\infty$ to $+\infty$ on the interval $(-\infty,1)$, with $F(0)=0$. Next, $G(\lambda)$ is strictly decreasing from $1$ to
$-\infty$ on the interval $(1,\infty)$, and therefore, on this interval, the function $F$ has a unique critical point $m>1$, with $F(m)<0$.
(Numerical computations show that $m=1,5652$.)

\medskip
\noindent
The equation (\ref{eq9.8}) can be written as a polynomial equation $\lambda ^3-\lambda^2+(1+w)\lambda -w=0$, so that the three roots $\lambda_j$ satisfy
\begeq
\label{eq9.91}
\sum_{j=1}^3 \lambda_j=1.
\endeq
We get therefore the following information about the solutions of (\ref{eq9.8}).
\begin{itemize}
\item If $w\le F(m)$, then the three roots are real, and two are $>1$. The remaining root is negative.
\item If $w>F(m)$, then we have precisely one real root $\lambda_1$ and it belongs to the interval $(-\infty ,1)$. The other two roots are of the form
$a\pm ib$, with $a$, $b\in \real$, and from the relation (\ref{eq9.91}) we see that $2a+\lambda _1=1$, implying that $a>0$.
\end{itemize}
The main conclusion concerning the three roots $\lambda _j$ is as follows:
\begin{itemize}
\item If $w>0$ then all three roots have  positive real parts.
\item If $w=0$ then one root vanishes and the other two have positive
  real parts.
\item If $w<0$ then one root is negative and the other two have  positive
  real parts.
\item For each eigenvalue $w$, the $x$-component of the eigenvector $(x,y,z)^t$ of $N$ is the corresponding eigenvector of $W''(x_0)$.
\end{itemize}

\bigskip
\noindent
As the next step in the analysis of the $\nu$-flow, we shall show that $\varphi_0$ is strictly increasing along the integral curves of $\nu$ in the
region where $W_0'(x)\ne 0$. When doing so, it will be convenient to write
$$
\nu = \nu_1(w_1,\partial_{w_1}) + \nu_2(w_2,\partial_{w_2}),\quad w_j = (x_j,y_j,z_j).
$$
Using (\ref{eq2}) and (\ref{eq9.5}), we get, still assuming that $\gamma =1$,
\begeq
\label{eq9.92}
\nu (\varphi_0)=\sum_{j=1}^2\frac{1}{\alpha _j}(z_j-x_j)^2,
\endeq
\begeq
\label{eq9.93}
\nu ^2(\varphi_0 )=\sum_{j=1}^2 \frac{2}{\alpha _j}(z_j-x_j)\nu _j(z_j-x_j),
\endeq
\begeq
\label{eq9.94}
\nu^3(\varphi_0)=\sum_{j=1}^2\frac{2}{\alpha _j}\left((\nu_j(z_j-x_j))^2 +(z_j-x_j)\cdot \nu _j^2(z_j-x_j)\right),
\endeq
\begeq
\label{eq9.95}
\nu ^4(\varphi_0)=\sum_{j=1}^2\left(\frac{6}{\alpha _j}(\nu _j^2(z_j-x_j))\nu_j(z_j-x_j)+\frac{2}{\alpha _j}(z_j-x_j)\nu _j^3(z_j-x_j) \right),
\endeq
\begin{multline}
\label{eq9.96}
\nu ^5(\varphi_0 )=\sum_{j=1}^2 \left(\frac{6}{\alpha _j}(\nu
  _j^2(z_j-x_j))^2+\right.\\
\left.+ \frac{8}{\alpha _j}(\nu _j^3(z_j-x_j))\nu _j(z_j-x_j)+\frac{2}{\alpha _j}(z_j-x_j)\nu _j^4(z_j-x_j) \right).
\end{multline}

\medskip
\noindent
Here
$$
\nu (z-x)=\gamma (z-x)-y,
$$
$$
\nu ^2(z-x)=(\gamma ^2-1)(z-x)-\gamma y+W_0'(x).
$$
Still working in the region where $W'_0(x)\neq 0$, we see that
\begin{itemize}
\item[1)] $\nu (\varphi_0) \ge 0$ with the equality precisely when $z=x$.
\item[2)] If $\nu (\varphi_0)=0$ and $y\ne 0$, then $\nu^3(\varphi_0)>0$.
\item[3)] If $\nu (\varphi_0)=0$ and $y=0$, then $\nu^2(z-x)\ne 0$ and
  $\nu ^5(\varphi_0)>0$.
\end{itemize}
This leads to the statement that for every compact set where $W_0'(x)\ne 0$ there exists a constant $C>0$ such that,
\begeq
\label{eq9.97}
\nu (\varphi_0)\ge \frac{1}{C}\hbox{ or }\nu^3 (\varphi_0)\ge \frac{1}{C}\hbox{ or }\nu^5 (\varphi_0)\ge \frac{1}{C}.
\endeq

\begin{prop}
\label{prop1}
For every compact set $K\subset {\bf R}^{6n}_{x,y,z}$ where $W_0'(x)\ne 0$ there exists a constant $C>0$ such that
\begeq
\label{eq9.98}
\varphi_0 \circ \exp t\nu(x) -\varphi_0 (x)\ge t^5/C,\quad x\in K,\quad 0\le t\le 1/C.
\endeq
\end{prop}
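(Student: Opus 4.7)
The plan is to give a direct quantitative lower bound on
\[
\varphi_0(\exp t\nu(x)) - \varphi_0(x) = \int_0^t (\nu\varphi_0)(\exp s\nu(x))\,ds = \int_0^t \sum_{j=1}^{2}\frac{|u_j(s)|^2}{\alpha_j}\,ds,
\]
where $u_j(s) := (z_j - x_j)(\exp s\nu(x)) \in \real^n$ and we have used monotonicity of $\varphi_0$ along the $\nu$-flow together with \eqref{eq9.92}. The first step is a Taylor expansion of $u_j(s)$ to second order in $s$: from $\nu(z_j - x_j) = (z_j - x_j) - y_j$ and $\nu^2(z_j - x_j) = W_j'(x_j) - y_j$ (for $\gamma = 1$), one writes $u_j(s) = p_j(s) + R_j(s)$ with $p_j(s) = a_j + b_j s + c_j s^2$, where $a_j = u_j(0)$, $b_j = u_j(0) - y_j(0)$, $c_j = (W_j'(x_j(0)) - y_j(0))/2$, and where $|R_j(s)| \leq M s^3$ uniformly in $x \in K$ and $s \in [0, 1]$, by compactness of $K$ and completeness of the $\nu$-flow.

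The core of the argument exploits positive definiteness of the quadratic form $Q(a,b,c) := \int_0^1 (a+b\tau+c\tau^2)^2\,d\tau$ on $\real^3$. Its smallest eigenvalue $\lambda > 0$ gives, after rescaling $s = t\tau$,
\[
\int_0^t |p_j(s)|^2\,ds \geq \lambda\, t\bigl(|a_j|^2 + t^2|b_j|^2 + t^4|c_j|^2\bigr).
\]
The key algebraic step is then: using the elementary vector inequalities $|u_j - y_j|^2 \geq |y_j|^2/2 - |u_j|^2$ and $|W_j'(x_j) - y_j|^2 \geq |W_j'(x_j)|^2/2 - |y_j|^2$, one verifies that for each $j$ and $t \in [0, 1]$,
\[
|a_j|^2 + t^2|b_j|^2 + t^4|c_j|^2 \geq |u_j(0)|^2(1-t^2) + |y_j(0)|^2\Bigl(\tfrac{t^2}{2}-\tfrac{t^4}{4}\Bigr) + \tfrac{t^4}{8}|W_j'(x_j(0))|^2 \geq \tfrac{t^4}{8}|W_j'(x_j(0))|^2,
\]
and summation yields $\sum_j(\cdots) \geq t^4|W_0'(x)|^2/8 \geq t^4 \delta_0^2/8$, where $\delta_0 := \inf_K |W_0'| > 0$ by compactness and the hypothesis $W_0'(x) \neq 0$ on $K$.

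Finally, applying $|u_j|^2 \geq |p_j|^2/2 - |R_j|^2$ (a consequence of $|X+Y|^2 \geq |X|^2/2 - |Y|^2$) and the uniform bound $|R_j(s)|^2 = O(s^6)$, one obtains
\[
\int_0^t \sum_j \frac{|u_j(s)|^2}{\alpha_j}\,ds \geq c_1 t^5 - c_2 t^7
\]
with $c_1, c_2 > 0$ depending only on $K$. Choosing $t_0>0$ such that $c_2 t_0^2 \leq c_1/2$ and taking $C = \max(1/t_0, 2/c_1)$ completes the proof. The main obstacle, and the decisive insight, is the coefficient-sum inequality $\sum_j(|a_j|^2 + t^2|b_j|^2 + t^4|c_j|^2) \geq t^4|W_0'(x)|^2/8$: it decouples the estimate from the a priori unknown values of $(u_j(0), y_j(0))$ and reduces the problem entirely to the hypothesis $W_0'(x) \neq 0$ on $K$. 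In this way the trichotomy \eqref{eq9.97} is absorbed into a single clean quadratic-form estimate.
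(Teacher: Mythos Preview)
Your proof is correct and takes a genuinely different route from the paper's.

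The paper proceeds via the trichotomy \eqref{eq9.97}: at each point of $K$ one of $\nu(\varphi_0)$, $\nu^3(\varphi_0)$, $\nu^5(\varphi_0)$ is bounded below, and the three cases are then handled separately by convexity arguments on $f(s)=\nu(\varphi_0)\circ\exp s\nu(x)$, the last case requiring an even/odd symmetrization $g(s)=\tfrac12(f(s)+f(-s))$ and an analysis of the convex function $k(s^2)=g(s)$. Your argument replaces all of this by a single second--order Taylor expansion of $u_j(s)=(z_j-x_j)(\exp s\nu(x))$, the positive definiteness of the Gram form $\int_0^1(a+b\tau+c\tau^2)^2\,d\tau$, and the elementary chain of inequalities that extracts $|W_0'(x)|^2$ from the Taylor coefficients. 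The key identity you use, that the second Taylor coefficient $2c_j$ equals $\partial_{x_j}W_0(x(0))-y_j(0)$ when $\gamma=1$, is exactly the paper's formula $\nu^2(z-x)=(\gamma^2-1)(z-x)-\gamma y+W_0'(x)$ specialized to $\gamma=1$; your algebraic step then does in one stroke what the paper's observations 1)--3) and the trichotomy accomplish. The remainder control $|R_j(s)|\le Ms^3$ is justified since $\{\exp s\nu(x):x\in K,\ s\in[0,1]\}$ is compact and $\nu^3(z-x)$ is smooth. Two small remarks: your notation $W_j'(x_j)$ presupposes the direct sum $W_0=W_1+W_2$, but the argument goes through verbatim with $\partial_{x_j}W_0$ in place of $W_j'(x_j)$, matching the generality of the paper's proof; and the positive-definiteness step applies componentwise in $\mathbf{R}^n$, which you are implicitly using. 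Your approach is more quantitative and avoids the case analysis, while the paper's convexity argument is closer in spirit to the dynamical picture of the $\nu$--flow.
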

\begin{proof}
Consider the function $f(s)=\nu (\varphi_0)\circ \exp s\nu (x)$ along an integral curve $(-\frac{1}{C},\frac{1}{C})\ni t\mapsto \exp t\nu (x)$
of $\nu $. If $f(0)\ge \mathrm{Const.\,}>0$, we get $\int_0^t f(s)ds\ge t/C$ for $0\le t\ll 1$ and (\ref{eq9.98}) follows (for this
value of $x$).

\medskip
\noindent
If $0\le f(0)\ll 1$ and $f''(0)\ge \mathrm{Const.\,}>0,$ then $f(s)$ is a strictly convex and non-negative function so there exists
a unique point $s_0$ close to $0$, where $f(t)$ attains its minimum and we have $f(s)\ge (s-s_0)^2/C$. Integrating this inequality, we see
that $\int_0^tf(s)ds\ge t^3/C$ and (\ref{eq9.98}) follows.

\medskip
\noindent
It remains to treat the case when $0\le f(0)\ll 1$, $f''(0)\le \epsilon \ll 1$ and $f^{(4)}(0)\ge \mathrm{Const.\,}>0$. The function
$g(s)=\frac{1}{2}(f(s)+f(-s))$ is even and has the Taylor expansion
$$
g(s)=f(0)+\frac{1}{2}f''(0)s^2+\frac{1}{4!}f^{(4)}(0)s^4+{\cal O}(s^6),
$$
and is therefore a smooth and strictly convex function of $s^2$, non-negative for $s^2\ge 0$. Denote this function by $k(s^2)$ and
restrict the attention to the interval $[0,t^2]$. If $k'(0)\ge 0$, the strict convexity implies that $k(\tau )\ge \tau ^2/C$ on $[0,t^2]$. If
$k'(t^2)\le 0$ we have $k(\tau )\ge (\tau -t^2) ^2/C$. In the remaining case when $k'(0)\le 0$ and $k'(t^2)\ge 0$ there exists $\tau
_0\in [0,t^2]$ such that $k(\tau )\ge (\tau -\tau _0) ^2/C$ on $[0,t^2]$ and this is finally the conclusion in all three cases.

\medskip
\noindent
Thus, $g(s)\ge \frac{1}{C}(s^2-\tau _0)^2$ for $-t\le s\le t$ and by an easy calculation, we obtain that
$$
\int_{-t}^t f(s)ds=\int_{-t}^tg(s)ds\ge
\frac{1}{C}\int_{-t}^t(s^2-\tau _0)^2ds\ge \frac{t^5}{\widetilde{C}},\ \
0\le t\ll 1.
$$
In other words, $(\varphi_0 \circ \exp t\nu -\varphi_0 \circ \exp (-t\nu))(x) $
has a lower bound as in (\ref{eq9.98}), and we are allowed to vary the
point $x$ in a small neighborhood, so we get the same conclusion for
$(\varphi_0\circ \exp 2t\nu -\varphi_0)(x)$ and after replacing $2t$ by $t$, for $(\varphi_0 \circ \exp t\nu -\varphi_0)(x)$.
\end{proof}

\medskip
\noindent
From the statement of Theorem \ref{thm_main}, let us recall the set $\Omega _+\subset {\bf R}^{6n}$, defined as the connected component of the set
$$
\varphi_0 ^{-1}((-\infty ,\varphi_0(\widetilde{S}_0))),
$$
which contains $\widetilde{M}_+$. It follows from the property 1) prior to (\ref{eq9.97}) that
$$
\exp (t\nu ) (\Omega _+)\subset \Omega _+,\quad t \le 0,
$$
and Proposition \ref{prop1} gives that $\exp(t\nu )(K)$ converges to $\{\widetilde{M}_+\}$ when $t\to -\infty$, for every $K\Subset \Omega _+$.

\medskip
\noindent
The stable manifold theorem tells us that in a suitable neighborhood of $\widetilde{S}_0$ there is a unique curve $\Gamma $ (manifold of dimension 1),
which is stable under the $\nu $--flow in the sense that if $x\in \Gamma $ then $\exp (t\nu )(x)$ converges to $\widetilde{S}_0$, exponentially fast, when
$t\to +\infty $. The set $\Gamma _+:=\Gamma \cap \Omega _+$ is also invariant under the forward $\nu $--flow and is
the image of a (connected) curve. For $w_0\in \Gamma _+$, let us put $\gamma(t)=\exp (t\nu )(w_0)$, $t\in {\bf R}$. Then $\gamma $ is a smooth
curve in $\Omega _+$ and we have,
\begeq
\gamma (t)\to \begin{cases}\widetilde{M}_+,\ &t\to -\infty ,\\
    \widetilde{S}_0,\ &t\to +\infty .
\end{cases}
\endeq
The trajectory $\gamma(t)$ will play a crucial role in the proof of Theorem \ref{thm_main}.

\bigskip
\noindent
Let us now resume the analysis of the eikonal equation (\ref{eq9}). The perturbation $\delta W(x)$ will be chosen so that $M_+\notin {\rm supp}\,(\delta W)$,
and since we are interested in smooth solutions $\varphi$ of (\ref{eq8.2}), for which $\varphi(\widetilde{M}_+) = \varphi'(\widetilde{M}_+) = 0$,
$\varphi''(\widetilde{M}_+)>0$, as we saw above, we have $\varphi = \varphi_0$ in a \neigh{} of $\widetilde{M}_+=(M_+,0,M_+)$. We shall therefore study
the solvability of the problem (\ref{eq9}), assuming that $\psi=0$ in a \neigh{} of $(M_+,0,M_+)$.

\begin{prop}
\label{prop2}
Let $\delta W\in C^{\infty}(\real^{2n})$ be such that $M_+\notin {\rm supp}\,(\delta W)$ and assume that $\delta W(x_1,x_2)$ is a
homogeneous polynomial of degree $m \geq 3$ in $x_2$. If $\psi\in C^{\infty}(\Omega)$, for $\Omega_+\Subset \Omega$, satisfies {\rm (\ref{eq9})} with
$\psi=0$ near $(M_+,0,M_+)$, then we have in $\Omega_+$,
$$
\psi(x,y,z) = {\cal O}((x_2,y_2,z_2)^2).
$$
\end{prop}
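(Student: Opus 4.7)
The plan is to Taylor-expand $\psi$ in the transverse variable $u_2 := (x_2, y_2, z_2)$ to the slice $L_2 := \{u_2 = 0\}$, and to show separately that the Taylor coefficients of order zero and order one must vanish on $L_2 \cap \Omega_+$. Writing $\psi(w_1, u_2) = \psi_0(w_1) + \psi_1(w_1) \cdot u_2 + {\cal O}(|u_2|^2)$ with $w_1 = (x_1, y_1, z_1)$, this will establish $\psi = {\cal O}(|u_2|^2)$ throughout $\Omega_+$.

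The driving order-counting observations are as follows. Since $\delta W(x_1, x_2)$ is homogeneous of degree $m \geq 3$ in $x_2$, we have $\partial_x \delta W = {\cal O}(|x_2|^{m-1})$, so that both the source term $2 \partial_x \delta W \cdot \partial_y \varphi_0$ and the nonlinear coupling $\partial_x \delta W \cdot \partial_y \psi$ in \eqref{eq9} vanish to order at least two in $u_2$. Furthermore, $\nu_2$ has coefficients that are linear in $u_2$ (since $W_2$ is a positive definite quadratic form with minimum at the origin), and therefore its contribution to $\nu \psi$ vanishes at $u_2 = 0$. Setting $u_2 = 0$ in \eqref{eq9} therefore yields the reduced equation
\[
\nu_1 \psi_0 + \frac{\gamma \alpha_1}{2} (\partial_{z_1} \psi_0)^2 + \frac{\gamma \alpha_2}{2} |\psi_{1, z_2}|^2 = 0, \qquad (*)
\]
where $\psi_{1, z_2}$ denotes the $z_2$-component of $\psi_1$ viewed as a function on $L_2$.

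The central step of the argument is to prove, from $(*)$, the smoothness of $\psi$, and the assumption $\psi \equiv 0$ near $\widetilde{M}_+$, that $\psi_0 \equiv 0$ and $\psi_{1, z_2} \equiv 0$ on $L_2 \cap \Omega_+$. Since $\nu_1 \psi_0 \leq 0$ in view of $(*)$ and backward $\nu_1$-orbits of points in $L_2 \cap \Omega_+$ converge to $(m_+, 0, m_+)$ (Proposition \ref{prop1} applied to the single-oscillator subsystem inherent in $L_2$), integrating $(*)$ along such backward orbits immediately yields $\psi_0 \leq 0$. The reverse inequality is the main obstacle. I would obtain it by combining the smoothness of $\psi$ on $\overline{\Omega_+}$ with the local uniqueness of the generating function of the unstable Lagrangian manifold through the doubly characteristic point over $(m_+, 0, m_+)$ for the single-oscillator Hamiltonian $q_{0,1}$, propagated along the $\nu_1$-flow throughout $L_2 \cap \Omega_+$. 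This uniqueness, combined with the sign structure of $(*)$, forces the non-negative source term $\frac{\gamma \alpha_2}{2} |\psi_{1, z_2}|^2$ to vanish, and in turn forces $\psi_0 \equiv 0$.

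With $\psi_0 \equiv 0$ and $\psi_{1, z_2} \equiv 0$ in hand, the concluding step is straightforward. Differentiating \eqref{eq9} once in the $u_2$-directions and evaluating at $u_2 = 0$, all nonlinear contributions drop out, since they carry factors $\partial_{z_1} \psi_0 = 0$ or $\psi_{1, z_2} = 0$; and all perturbation and source contributions drop out as well, using that the first $u_2$-derivatives of $\partial_x \delta W$ still vanish at $u_2 = 0$ for $m \geq 3$. What remains is a linear transport system of the form $(\nu_1 + N_2^{\mathrm t}) \psi_1 = 0$ on $L_2 \cap \Omega_+$, where $N_2$ is the linearization of $\nu_2$ at $u_2 = 0$. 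Reading off this system in the $(x_2, y_2, z_2)$-basis, the $z_2$-row combined with $\psi_{1, z_2} = 0$ forces $\psi_{1, y_2} = 0$, the $y_2$-row then forces $\psi_{1, x_2} = 0$, and the $x_2$-row becomes an identity. Combined with the vanishing of $\psi_1$ near $\widetilde{M}_+$, this yields $\psi_1 \equiv 0$, completing the proof.
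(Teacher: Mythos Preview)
Your Taylor-expansion strategy is natural, and your order-counting for the source and perturbation terms is correct. The concluding cascade (step with $N_2^{\mathrm t}$) is also fine: once $\psi_0\equiv 0$ and $\psi_{1,z_2}\equiv 0$ are known, the first-order system $(\nu_1+N_2^{\mathrm t})\psi_1=0$ together with the explicit structure of $N_2^{\mathrm t}$ does force $\psi_{1,y_2}=0$ and then $\psi_{1,x_2}=0$ pointwise.

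The genuine gap is in the step you yourself flag as ``the main obstacle'': deducing $\psi_0\equiv 0$ (and hence $\psi_{1,z_2}\equiv 0$) from $(*)$. Your equation $(*)$ is \emph{not} a closed equation for $\psi_0$; it contains the unknown $\psi_{1,z_2}$, and your appeal to ``local uniqueness of the generating function of the unstable Lagrangian manifold for $q_{0,1}$'' does not apply here. That uniqueness statement concerns smooth solutions of the \emph{homogeneous} single-oscillator eikonal $q_{0,1}(w_1,\partial_{w_1}\chi)=0$ with prescribed Hessian at the critical point. What you have instead is that $\varphi_0|_{L_2}+\psi_0$ satisfies $q_{0,1}(w_1,\partial_{w_1}(\varphi_0|_{L_2}+\psi_0))=-\tfrac{\gamma\alpha_2}{2}|\psi_{1,z_2}|^2\le 0$, i.e.\ a subsolution inequality with an unknown right-hand side. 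No comparison or uniqueness principle is supplied, and I do not see one that works here: integrating $(*)$ along forward $\nu_1$-orbits only gives $\psi_0\le 0$, and forward orbits generically exit $\overline{\Omega_+}$, so no boundary or boundedness argument closes the gap.

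The paper avoids this coupling entirely by working with the characteristics of the \emph{full} Hamilton--Jacobi equation $p(w,\psi'_w)=0$ in $T^*\real^{6n}$, rather than Taylor-expanding. The key observation is that $\partial_{w_2}p$ and $\partial_{\omega_2}p$ are both ${\cal O}((w_2,\omega_2))$, so the $6n$-dimensional set $\{w_2=0,\ \omega_2=0\}$ is $H_p$-invariant; on it the flow reduces to $\nu_1$ in the base with $\omega_1\equiv 0$. Since $\Lambda_\psi$ is the $H_p$-flowout of the zero section near $\widetilde M_+$, this yields \emph{simultaneously} $\psi(w_1,0)=0$ and $\psi'(w_1,0)=0$ throughout $\Omega_+\cap L_2$, which is exactly $\psi={\cal O}(w_2^2)$. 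In effect, the invariance statement replaces both of your steps ($\psi_0=0$ and $\psi_1=0$) at once, without ever having to decouple them.
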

\begin{proof}
We shall view (\ref{eq9}) as a Hamilton-Jacobi equation of the form,
$$
p(x,y,z,\psi'_{x,y,z})=0,
$$
where
\begin{multline}
\label{eq9.981}
p(x,y,z,\xi,\eta,\zeta) = y\cdot \xi + \gamma(z-x)\cdot\zeta - \left(\partial_x W_0(x)+x-z\right)\cdot\eta \\
+ \frac{\gamma}{2}\sum_{j=1}^2 \alpha_j \zeta_j^2
-\partial_x \delta W\cdot \eta -2\partial_x \delta W \cdot \partial_y \varphi_0.
\end{multline}
In what follows, it will be convenient to write $w=(w_1,w_2)$, $w_j=(x_j,y_j,z_j)$, and $\omega=(\omega_1,\omega_2)$,
$\omega_j=(\xi_j,\eta_j,\zeta_j)$, for $j=1,2$. We know that the Lagrangian manifold $\Lambda_{\psi} = \{(w,\psi'(w))\}$ is the $H_p$--flowout
of the set
$$
\Lambda_{\psi}\cap {\rm neigh}((M_+,0,M_+;0,0,0),T^* \real^{6n}).
$$
To be precise, let
$\rho(0) = (w(0);\omega(0))\in {\rm neigh}((M_+,0,M_+;0,0,0), T^* \real^{6n})$
be such that $\omega(0)=0$, so that $\rho(0)\in \Lambda_{\psi} \subset p^{-1}(0)$, and consider the corresponding Hamiltonian trajectory
$$
(w(t);\omega(t)) = \rho(t) = \exp(tH_p)(\rho_0) \in \Lambda_{\psi}.
$$
We shall be interested in the trajectories $\rho(t)$ for which $w_2(0)=0$. It follows from
(\ref{eq2}) and (\ref{eq9.981}) that
$$
\partial_{w_2} p(w,\omega) = {\cal O}((w_2,\omega_2)),\quad \partial_{\omega_2} p(w,\omega) = {\cal O}((w_2,\omega_2)),
$$
and hence the Hamilton equations
$$
\dot{w_2}(t) = \partial_{\omega_2} p(w(t),\omega(t)),\quad \dot{\omega_2}(t) = -\partial_{w_2}p(w(t),\omega(t)),
$$
imply that
$$
\left(\dot{w_2}(t),\dot{\omega_2}(t)\right) = {\cal O}((w_2(t),\omega_2(t))).
$$
Along a trajectory, for which $w_2(0) = \omega_2(0)=0$, we have therefore $w_2(t)\equiv 0$, $\omega_2(t)\equiv 0$. A straightforward computation shows
next that along the set where $w_2=0$, $\omega_2=0$, we have
$$
\partial_{w_1} p = {\cal O}(\omega_1),
$$
while
$$
\partial_{\omega_1}p\cdot \partial_{w_1} = \nu_1(w_1,\partial_{w_1}) + {\cal O}(\omega_1)\cdot \partial_{w_1},
$$
where $\nu_1(w_1,\partial_{w_1})= \gamma(z_1-x_1)\cdot \partial_{z_1} + y_1\cdot \partial_{x_1} - \left(\partial_{x_1} W_1(x_1)+x_1-z_1\right)\cdot \partial_{y_1}$.
It follows that an $H_p$--trajectory $\rho(t) = (w_1(t),w_2(t);\omega_1(t),\omega_2(t))$ for which $w_2(0)=0$, $\omega(0)=0$, satisfies
$$
\rho(t) = (\exp(t\nu_1)(w_1(0)),0;0,0).
$$
From the definition of $\Lambda_\psi$ we know that
\begin{equation} \label{xidx}
d\psi = \omega dw,
\end{equation}
 so that
$\psi'(w_1(t), w_2(t)) = (\omega_1(t),\omega_2(t)) = (0,0)$. Thus, $\psi'(w_1,0) = 0$ for all $w_1$ such that $(w_1, 0) \in \Omega_+$.
Now from (\ref{xidx}), by a classical formula given, for instance, in Chapter 1 of~\cite{DiSj}, we also have
\begeq
\label{eq9.982}
\psi(w(t)) = \psi(w(0)) + \int_0^t \omega(s)\cdot \partial_{\omega} p (w(s),\omega(s))\,ds,
\endeq
where we know that $\psi(w(0))=0$, so that $\psi(w_1, 0) = 0$ again for
all $w_1$ such that $(w_1, 0) \in \Omega_+$. Using this and $\psi'(w_1,0) = 0$, we get that $\psi(w) = {\mathcal O}(w_2^2)$ and the proof is complete.
\end{proof}

\bigskip
\noindent
In what follows, we shall assume that $\delta W\in C^{\infty}(\real^n)$ has the properties described in Proposition \ref{prop2}. Using also that
$$
\partial_{y_j} \varphi_0 = \frac{1}{\alpha_j} y_j,\quad j=1,2,
$$
we see that the right hand side of (\ref{eq9}) is homogeneous of degree $m$ in $w_2 = (x_2,y_2,z_2)$. If the problem (\ref{eq9}) has a smooth solution
$\psi\in C^{\infty}(\Omega)$, then we have a Taylor expansion at $w_2=0$, writing also $w_1 = (x_1,y_1,z_1)$,
\begeq
\label{eq10}
\psi(w_1,w_2) \simeq \sum_{k=0}^{\infty} \psi_k(w_1,w_2).
\endeq
Here $\psi_k(w_1,w_2)$ is homogeneous of degree $k$ in $w_2$, with $C^{\infty}$--coefficients in $w_1$, and by Proposition \ref{prop2}, we know that
$\psi_0$ and $\psi_1$ vanish in $\Omega_+$. It follows from (\ref{eq9.5}) that $\nu \psi_k$ is also homogeneous of degree $k$ in $w_2$. We see then
that the term, homogeneous of degree $\mu \geq 0$, in the left hand side of (\ref{eq9}), is given by
\begin{multline}
\label{eq11}
\nu \psi_{\mu} + \frac{\gamma}{2}\alpha_1 \sum_{k=0}^{\mu} (\partial_{z_1} \psi_k) (\partial_{z_1} \psi_{\mu - k}) +
\frac{\gamma}{2}\alpha_2 \sum_{k=0}^{\mu} (\partial_{z_2} \psi_{k+1}) (\partial_{z_2} \psi_{\mu-k+1}) \\
-\partial_{x_1} \delta W \cdot \partial_{y_1} \psi_{\mu - m} - \partial_{x_2} \delta W \cdot \partial_{y_2} \psi_{\mu+2-m}.
\end{multline}
Here it is understood that $\psi_j \equiv 0$, for $j<0$.

\bigskip
\noindent
It is now easy to conclude that $\psi_{\mu}$ all vanish, for $\mu <m$, on the open set $\Omega_+\subset \real^{6n}$.
%
Indeed, taking $\mu=2$ in (\ref{eq11}), we see that the sum
$$
\sum_{k=0}^2 (\partial_{z_1} \psi_k) (\partial_{z_1} \psi_{2-k})
$$
vanishes in $\Omega_+$, while the only non-vanishing term in the sum
$$
\sum_{k=0}^2 (\partial_{z_2} \psi_{k+1}) (\partial_{z_2} \psi_{2-k+1})
$$
is given by $(\partial_{z_2} \psi_2)^2$. We get the equation
\begeq
\label{eq14}
\nu \psi_2 + \frac{\gamma}{2} \alpha_2 (\partial_{z_2} \psi_2)^2 = 0,
\endeq
where we also know that $\psi_2$ vanishes near $(M_+,0,M_+)$. The preceding equation can be viewed as a first order ordinary differential equation for $\psi_2$ along the integral curves of $\nu$ in $\Omega_+ \cap \{ (w_1, w_2) \ ; \ w_2=0 \}$ and we can
conclude that $\psi_2 = 0$ in $\Omega_+$.

\medskip
\noindent
We shall now see, arguing inductively, that $\psi_{\mu}=0$ in $\Omega_+$, for $\mu < m$. Indeed, assume that
$\psi_0 = \psi_1 = \psi_2 = \ldots\, = \psi_{\mu-1}=0$, for some $2<\mu < m$. Then we have
$$
\sum_{k=0}^{\mu} (\partial_{z_1} \psi_k) (\partial_{z_1} \psi_{\mu-k}) = 0.
$$
As for the sum
$$
\sum_{k=0}^{\mu} (\partial_{z_2} \psi_{k+1}) (\partial_{z_2} \psi_{\mu-k+1}),
$$
we see that the only term here which is not clearly vanishing corresponds to the case when $k+1 = \mu$. In this case, the corresponding term is equal to
$(\partial_{z_2}\psi_{\mu}) \partial_{z_2} \psi_2$, which vanishes after all. The sum above consequently vanishes, and from (\ref{eq11}) we get the equation
$$
\nu \psi_{\mu}=0,\quad \mu < m.
$$
It follows that $\psi_{\mu}=0$ in $\Omega_+$, for $\mu <m$. We conclude that a smooth solution $\psi$ of (\ref{eq9}), such that $\psi=0$ near $(M_+,0,M_+)$,
has the following form in $\Omega_+$,
$$
\psi(w_1,w_2) = \psi_m(w_1,w_2) + {\cal O}(w_2^{m+1}).
$$
Combining (\ref{eq9}) and (\ref{eq11}), we see that $\psi_m$ should satisfy the following non-homogeneous transport equation in $\Omega_+$,
\begeq
\label{eq15}
\nu \psi_m = 2\partial_x \delta W \cdot \partial_y \varphi_0.
\endeq

\bigskip
\noindent
The proof of Theorem \ref{thm_main} will therefore be concluded, once we establish the following result.

\begin{prop}
\label{prop3}
There exist a positive definite quadratic form $W_2(x_2)$, a Morse function $W_1(x_1)$ with two local minima and a saddle point, and a perturbation
$\delta W\in C^{\infty}(\real^{2n})$ with $M_+\notin {\rm supp}\left(\delta W\right)$, $S_0 \notin {\rm supp}\left(\delta W\right)$,
and such that $\delta W(x_1,x_2)$ is a homogeneous polynomial of degree $m\geq 3$ in $x_2$, for which the transport equation {\rm (\ref{eq15})}
does not have a smooth solution in $\Omega_+\cup \{\widetilde{S}_0\}$.
\end{prop}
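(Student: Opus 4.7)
The plan is to reduce the smoothness of the solution of (\ref{eq15}) at $\widetilde{S}_0$ to a spectral condition at the saddle fixed point $\widetilde{S}_{0,1}$ of $\nu_1$, and then to construct $W_2$ and $\delta W$ violating that condition. Since $W_2$ is a positive definite quadratic form, $\nu_2$ is linear in $w_2$ and preserves the space of homogeneous degree-$m$ polynomials in $w_2$. Expanding $\psi_m(w_1,w_2)=\sum_{|\beta|=m}c_\beta(w_1)w_2^\beta$ and setting $\vec c=(c_\beta)_{|\beta|=m}$, the transport equation (\ref{eq15}) becomes the coupled first order linear system
\begin{equation*}
\nu_1\vec c + M\vec c = \vec f, \qquad \vec c=0 \text{ near } \widetilde{M}_{+,1},
\end{equation*}
on $w_1$-space, where $M$ is the constant square matrix representing $\nu_2$ on degree-$m$ polynomials. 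Its eigenvalues are the sums $\sum_j k_j\lambda_j^{(2)}$, $\sum k_j=m$, with $\lambda_j^{(2)}$ the eigenvalues of the linearization $N_2$ of $\nu_2$ at $\widetilde{S}_{0,2}=0$; by the discussion following (\ref{eq9.8}), since $W_2''(0)>0$, each $\lambda_j^{(2)}$ has strictly positive real part, and hence so does each eigenvalue of $M$.

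Next I would integrate the system along the stable curve $\gamma_1\subset\Omega_{+,1}$ of $\nu_1$ at $\widetilde{S}_{0,1}$. With $\delta W$ supported away from $m_+$ and $s_0$ in $x_1$-space, the source $\vec f\circ\gamma_1$ has compact support in $\tau$; Duhamel together with the condition at $-\infty$ yields, for $T$ large,
\begin{equation*}
\vec c(\gamma_1(T))=e^{-TM}\vec J, \qquad \vec J:=\int_{-\infty}^\infty e^{\tau M}\vec f(\gamma_1(\tau))\,d\tau,
\end{equation*}
a finite vector depending linearly on $\delta W$. The linearization $N_1$ of $\nu_1$ at $\widetilde{S}_{0,1}$ has a unique negative eigenvalue $-\lambda<0$, coming via (\ref{eq9.8}) from the negative eigenvalue of $W_1''(s_0)$. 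Parametrizing $\gamma_1$ near $\widetilde{S}_{0,1}$ by the associated local stable coordinate $s\sim e^{-\lambda T}$, the identity above becomes
\begin{equation*}
\vec c(\gamma_1(T))\sim s^{M/\lambda}\vec J, \qquad s\to 0^+.
\end{equation*}
A smooth extension of $\vec c$ to $\widetilde{S}_{0,1}$ along $\Gamma_{+,1}$ therefore requires that for every generalized eigenspace $V_\mu$ of $M$ with $\mu/\lambda\notin\Z_{\geq 0}$, the projection of $\vec J$ on $V_\mu$ vanishes, since otherwise $s\mapsto s^{M/\lambda}\vec J$ contains a non-integer-power or logarithmic term at $s=0$.

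For the final step I would first choose $W_2$, with $W_2''(0)$ positive definite, so that some eigenvalue $\mu$ of $M$ satisfies $\mu/\lambda\notin\Z_{\geq 0}$; this is a generic condition, easily achievable by a single small perturbation of one eigenvalue of $W_2''(0)$, as the $\lambda_j^{(2)}$ depend non-trivially on those via (\ref{eq9.8}). Then take $\delta W(x_1,x_2)=\chi(x_1)Q(x_2)$, with $Q$ a homogeneous polynomial of degree $m$ in $x_2$ and $\chi\in C_0^\infty(\R^n)$ supported in a small neighborhood of an intermediate point $x_1^*$ of the $x_1$-projection of $\gamma_1$ (so that $M_+,S_0\notin\mathrm{supp}(\delta W)$ and $\delta W$ is arbitrarily small in uniform norm). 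When $\chi$ is localized near $x_1^*$, at which $y_1(\gamma_1(\tau_*))\neq 0$ since $\gamma_1(\tau_*)$ is not a stationary point of $\nu_1$, one has
\begin{equation*}
\vec J \sim e^{\tau_* M}\vec f(\gamma_1(\tau_*))
\end{equation*}
up to a scalar factor depending on $\chi$. Since $e^{\tau_* M}$ is invertible, and as $Q$ varies the dominant term $(2/\alpha_1)\,y_1\cdot\nabla_{x_1}\chi\,Q(x_2)$ in $\vec f(\gamma_1(\tau_*))$ ranges over the space of degree-$m$ polynomials in $x_2$ sitting inside the degree-$m$ polynomials in $w_2$, the polynomial $Q$ can be chosen so that $\mathrm{proj}_{V_\mu}\vec J\neq 0$, thereby forcing non-smoothness of $\vec c$ (and hence of $\psi_m$) at $\widetilde{S}_0$.

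The main obstacle is this last step: showing that with the restricted form $\delta W=\chi(x_1)Q(x_2)$ the linear functional $(\chi,Q)\mapsto \mathrm{proj}_{V_\mu}\vec J$ does not identically vanish. Equivalently, one must check that the composition of the natural embedding of degree-$m$ polynomials in $x_2$ into degree-$m$ polynomials in $w_2$ with $e^{\tau_* M}$ and $\mathrm{proj}_{V_\mu}$ is non-zero for a suitable choice of $\mu$ (hence of $W_2$). If for some special $\tau_*$ this composition happened to vanish, the construction can be enlarged by summing bumps localized near several distinct points of the $x_1$-projection of $\gamma_1$: different $\tau_*$ give different invertible operators $e^{\tau_* M}$, so the span of their images covers the full subspace, and vanishing for every $\tau_*$ would contradict invertibility of $e^{\tau M}$ together with non-triviality of $V_\mu$.
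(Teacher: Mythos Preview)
Your overall strategy coincides with the paper's: integrate the transport system along the heteroclinic orbit $\gamma_1$ of $\nu_1$ connecting $\widetilde{M}_{+,1}$ to $\widetilde{S}_{0,1}$, and detect a non-integer power $s^{\mu/\lambda}$ in the solution at the saddle. The reduction to $\nu_1\vec c+M\vec c=\vec f$, the Duhamel formula $\vec c(\gamma_1(T))=e^{-TM}\vec J$, and the identification of the stable-coordinate asymptotics are all in line with what the paper does after diagonalizing $\nu_2$.

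There is, however, a genuine gap in your final step, and it is precisely the place where the hypothesis $\alpha_1\neq\alpha_2$ must enter. Recall that when $\alpha_1=\alpha_2=\alpha$, the right hand side of (\ref{eq15}) equals $\frac{2}{\alpha}\,\nu(\delta W)$, so $\psi_m=\frac{2}{\alpha}\delta W$ is a smooth global solution and the proposition is \emph{false}. In your framework this means that for $\alpha_1=\alpha_2$ one has $\vec J=0$ for \emph{every} choice of $\chi$ and $Q$; the two contributions to $\vec f$, namely $\frac{2}{\alpha_1}\,y_1\!\cdot\!\nabla_{x_1}\chi\,Q(x_2)$ and $\frac{2}{\alpha_2}\,\chi(x_1)\,y_2\!\cdot\!\nabla_{x_2}Q$, conspire to cancel in $\vec J$ after integration. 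Your argument treats the first of these as ``dominant'' and ignores the second, but they are of the same homogeneity in $w_2$ and, along $\gamma_1$, comparable after integration (note that $y_1\!\cdot\!\nabla_{x_1}\chi=\frac{d}{d\tau}\chi(x_1(\gamma_1(\tau)))$, so the ``large'' gradient contribution integrates to zero against a slowly varying factor). Consequently your localization estimate $\vec J\sim e^{\tau_*M}\vec f(\gamma_1(\tau_*))$ and the multi-bump span argument do not by themselves force $\mathrm{proj}_{V_\mu}\vec J\neq 0$: any correct argument here must exhibit the factor $\frac{1}{\alpha_2}-\frac{1}{\alpha_1}$.

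The paper resolves this cleanly by first substituting $\psi_m=\frac{2}{\alpha_1}\delta W+u$, which turns (\ref{eq15}) into
\[
(\nu_1+\nu_2)u=\Bigl(\frac{2}{\alpha_2}-\frac{2}{\alpha_1}\Bigr)\,y_2\cdot\partial_{x_2}\delta W,
\]
so that the assumption $\alpha_1\neq\alpha_2$ appears explicitly as a nonzero scalar factor. It then diagonalizes $\nu_2$ (assuming $N_2$ has no Jordan blocks) to decouple into scalar equations $(\nu_1+\lambda\!\cdot\!\alpha)u_\alpha=g_\alpha(x_1)$, and chooses $\delta W=\pm\psi(x_1)v(x_2)$ with $\psi\geq 0$ so that some $g_{\alpha_0}$ has constant sign and does not vanish identically on $\gamma_1$. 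A sign argument then forces $u_{\alpha_0}$ to be nonzero either near $(m_+,0,m_+)$ (where it blows up like $e^{-(\lambda\cdot\alpha_0)t}$) or near $(s_0,0,s_0)$ (where it behaves like $x_1^{(\lambda\cdot\alpha_0)/\mu_1}$, non-smooth once $(\lambda\cdot\alpha_0)/\mu_1\notin\mathbf N$). This substitution and sign argument replace your last paragraph and close the gap; I would recommend adopting them.
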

\begin{proof}
With the notation $w_j=(x_j,y_j,z_j)$, $j=1,2$, let us write $\nu = \sum_{j=1}^2 \nu _j(w_j,\partial _{w_j})$. The preceding discussion shows
that there exists an integral curve $\gamma _1$ of $\nu _1$ such that
$$
\gamma_1(t)\to \begin{cases}(m_+,0,m_+) &t\to -\infty ,\\(s_0,0,s_0) &t\to +\infty .\end{cases}
$$

\medskip
\noindent
Let $N_2$ be the coefficient matrix of the linear vector field $\nu_2$, which we shall view as the linearization at $w_2=0$, and let
$\lambda _1,...,\lambda _{3n}$ be the corresponding eigenvalues, so that $\Re \lambda _j>0$, $1\leq j \leq 3n$. Let us assume, as we may, that $N_2$ has
no Jordan blocks, and after a linear change of variables, we may therefore assume that $w_2=(\omega _1,...,\omega _{3n})$, and
$$
\nu _2=\sum_{j=1}^{3n}\lambda_j \omega _j\partial _{\omega _j}.
$$
Then, writing $\lambda = (\lambda_1,\ldots\, \lambda_{3n})\in \comp^{3n}$, we have
$$
\nu _2(\omega ^\alpha )= \left({\lambda }\cdot \alpha\right) \omega^\alpha.
$$

\medskip
\noindent
Let us consider the equation (\ref{eq15}),
\begeq
\label{eq16}
(\nu _1+\nu _2)(\psi_m)=\frac{2}{\alpha _1}y_1\cdot \partial_{x_1}\delta W+\frac{2}{\alpha _2}y_2\cdot \partial_{x_2}\delta W,
\endeq
and put
$$
\psi_m = \frac{2}{\alpha _1}\delta W(x)+u.
$$
Then we get
\begeq
\label{eq17}
(\nu _1+\nu _2)(u)=\left(\frac{2}{\alpha_2}-\frac{2}{\alpha _1}\right)y_2\cdot \partial _{x_2}\delta W,
\endeq
where we recall that $\alpha _2\ne \alpha _1$. Using that $\delta W(x_1,x_2)$ is a homogeneous polynomial of degree $m\geq 3$ in $x_2$ for every $x_1$,
we may write
$$
\left(\frac{2}{\alpha _2}-\frac{2}{\alpha _1}\right)y_2\cdot \partial _{x_2}\delta
W=\sum_{|\alpha |=m}g_\alpha (x_1)\omega ^\alpha,
$$
and if (\ref{eq17}) has a smooth solution $u$, we can assume without
loss of generality that
$$
u=\sum_{|\alpha |=m}u_\alpha (w_1)\omega ^\alpha .
$$
The equation (\ref{eq17}) reduces then to the following decoupled system of equations,
\begeq
\label{eq18}
\left(\nu _1(w_1,\partial _{w_1})+{\lambda }\cdot \alpha\right) u_\alpha (w_1)=g_\alpha (x_1),\quad  \abs{\alpha} = m.
\endeq

\medskip
\noindent
We shall choose $\delta W(x_1,x_2) = \pm\psi (x_1)v(x_2)$, where $0\le \psi \in C_0^\infty ({\bf R}^n)$, $m_+,s_0\not\in \mathrm{supp\,}\psi$, while
$\psi >0$ somewhere on the image of $\gamma _1$, and where $v$ is a homogeneous polynomial of degree $m$. Then for some $\alpha =\alpha _0$ of length
$m$, we have $0\le g_\alpha \in C_0^\infty ({\bf R}^n)$, $m_+,s_0\not\in \mathrm{supp\,}g_\alpha $, and $g_\alpha $ is $>0$
somewhere on the image of $\gamma _1$.

\medskip
\noindent
The equation (\ref{eq18}) along $\gamma_1$ with $\alpha=\alpha _0$ has a compactly supported right hand side with constant
sign, not identically equal to 0, so the solution $\widetilde{u}=u_{\alpha _0}$ is non-zero either on $\gamma _1\cap \mathrm{neigh\,}(m_+,0,m_+)$ or on
$\gamma _1\cap \mathrm{neigh\,}(s_0,0,s_0)$.

\medskip
\noindent
In the first case we use the natural parametrization $\gamma_1(t)=\exp t\nu _1(w_0)$, $-\infty <t<+\infty $, where $w_0$ is some
fixed point on $\gamma _1$. Then there is a constant $C>0$ such that $\mathrm{dist\,}(\gamma _1(t),(m_+,0,m_+))\le C e^{-|t|/C}$ for $t\le 0$. For $t$ large
and negative, we have
$$
\left(\frac{d}{dt}+{\lambda }\cdot \alpha \right)u_\alpha (\gamma _1(t))=0,
$$
and hence,
$$
u_\alpha (\gamma (t))=Ce^{-({\lambda }\cdot \alpha) t },\ C\ne 0.
$$
Thus $u_\alpha $ is unbounded near $(m_+,0,m_+)$, and hence cannot be smooth near that point.

\bigskip
\noindent
In the second case, we recall that $\gamma _1$ is a part of the one-dimensional stable manifold through $(s_0,0,s_0)$ for the
$\nu _1$--flow. We can find new smooth local coordinates $x=(x_1,x'')$ centered at $(s_0,0,s_0)$, such that this stable manifold is given by $x''=0$, and hence
$$
\nu _1=a_1(x)\partial _{x_1}+\sum_{j=2}^{3n} a_j(x)\partial _{x_j},
$$
where $a_j(x_1,0)=0$ when $j\ge 2$. Furthermore, $a_1(x_1,0)=-\mu_1(x_1+f(x_1))$ where $\mu _1>0$, $f(x_1)={\cal O}(x_1^2)$ and we may
assume that $\gamma _1$ coincides with the positive $x_1$--axis near $(s_0,0,s_0)$. Along $\gamma _1$ and near $(s_0,0,s_0)$ we know that $u_\alpha $ is a
non-vanishing solution of the following equation,
$$
(-\mu _1(x_1+f(x_1))\partial _{x_1} + {\lambda}\cdot \alpha )u_\alpha =0,\ 0<x_1\ll 1,
$$
so that
$$
u_\alpha (x_1,0)=C\,\exp \left(\frac{{\lambda }\cdot \alpha }{\mu _1}\int_{x_1^0}^{x_1}\frac{1}{s+f(s)}ds \right),
$$
where $C\ne 0$ and $x_1^0>0$ is small and fixed. Here
$$
\frac{1}{s+f(s) }=\frac{1}{s}\,\frac{1}{1+\frac{f(s)}{s}}=\frac{1}{s}-\frac{f(s)}{s^2}+\frac{f(s)^2}{s^3}....,
$$
so
$$
\int_{x_1^0}^{x_1}\frac{1}{s+f(s)}ds =\ln x_1 +g(x_1)
$$
where $g$ is smooth near $x_1=0$. Thus,
$$
u_\alpha (x_1,0)=C\,x_1^{\frac{{\lambda }\cdot \alpha}{\mu _1}}e^{g(x_1)},\ C\ne 0,
$$
and if $\frac{{\lambda }\cdot \alpha}{\mu _1}\not\in {\bf N}$ (which can be arranged by choosing the parameters suitably, c.f. (\ref{eq9.8})),
we conclude that $u_\alpha$ cannot be smooth near $(s_0,0,s_0)$. The proof of Proposition \ref{prop3} is complete.
\end{proof}


\end{document}